\newif\if@restonecol
\begin{document}
\title{Distributed ADMM with Synergetic Communication and Computation}
\author{Zhuojun Tian,~%\IEEEmembership{Student~Member,~IEEE,}
        Zhaoyang Zhang,~%\IEEEmembership{Member,~IEEE,}
        Jue Wang,~%\IEEEmembership{Student~Member,~IEEE,}
        Xiaoming Chen,~%\IEEEmembership{Senior~Member,~IEEE,}
        Wei Wang,~%\IEEEmembership{Senior~Member,~IEEE,}
        and Huaiyu Dai%,~\IEEEmembership{Fellow,~IEEE}
\thanks{Part of this work \cite{SCCD} was presented at ICNC 2020, Hawaii, USA.}
\thanks{This work was supported in part by National Key R\&D Program of China under Grant 2018YFB1801104, and National Natural Science Foundation of China under Grant 61725104 and 61631003, and Ningbo S\&T Major Project (No.~2019B10079).}
\thanks{Z. Tian (email: dankotian@zju.edu.cn), Z. Zhang (Corresponding Author, email: ning\_ming@zju.edu.cn), J. Wang, X. Chen, W. Wang are with the College of Information Science and Electronic Engineering, Zhejiang University, Hangzhou, China, and with Zhejiang Provincial Key Laboratory of Info. Proc., Commun. \& Netw. (IPCAN), Hangzhou 310027, China, and also with International Joint Innovation Center, Zhejiang University, Haining 314400, China. H. Dai (email: huaiyu\_dai@ncsu.edu) is with the Department of Electrical and Computer Engineering, NC State University, Raleigh, NC 27695, USA.}
%\IEEEauthorblockA{College of Information Science and Electronic Engineering, Zhejiang University, Hangzhou, China\\
%Zhejiang Provincial Key Laboratory of Info. Proc., Commun. \& Netw. (IPCAN), Hangzhou, China\\
%E-mails: \{dankotian$\IEEEauthorrefmark{1}$, ning\_ming$\IEEEauthorrefmark{2}$\}@zju.edu.cn;
%}
}

%\date{}

\maketitle

\begin{abstract}
In this paper, we propose a novel distributed alternating direction method of multipliers (ADMM) algorithm with synergetic communication and computation, called SCCD-ADMM, to reduce the total communication and computation cost of the system. Explicitly, in the proposed algorithm, each node interacts with only part of its neighboring nodes, the number of which is progressively determined according to a heuristic searching procedure, which takes into account both the predicted convergence rate and the communication and computation costs at each iteration, resulting in a trade-off between communication and computation. Then the node chooses its neighboring nodes according to an importance sampling distribution derived theoretically to minimize the variance with the latest information it locally stores. Finally, the node updates its local information with a new update rule which adapts to the number of communication nodes.
We prove the convergence of the proposed algorithm and provide an upper bound of the convergence variance brought by randomness.
%{Moreover, the sampling distribution is derived theoretically to minimize the variance.}
Extensive simulations validate the excellent performances of the proposed algorithm in terms of convergence rate and variance, the overall communication and computation cost, {the impact of network topology as well as the time for evaluation}, in comparison with the traditional counterparts.

%
%Distributed alternating direction method of multipliers (ADMM) is a widely used algorithm for large-scale optimization problems due to its fast convergence rate. However, most studies implement the distributed ADMM by communicating with all neighbor nodes, which assumably ignores the communication cost. In this paper, we propose a novel distributed ADMM algorithm with synergetic communication and computation, called SCCD-ADMM, as an attempt to reduce the total cost of the system. Specifically, each node searches the number of communication nodes considering both the communication and computation cost, chooses a subset of its neighbor nodes to receive messages from and then updates the local variables using a newly designed update rule, to attain proper trade-off between the costs of communication and computation. Theoretically, we first give the convergence analysis of the update rule, and then derive the best sampling distribution used when choosing communication nodes. This strategy helps each node to decide which neighbor nodes to communicate with and leads to the minimized variance.
%Numerical experiments validate that total cost can be greatly saved in SCCD-ADMM compared with the traditional distributed ADMM, especially in the case where the communication is costly.
\end{abstract}
\begin{IEEEkeywords}
Alternating direction method of multipliers (ADMM), synergetic communication and computation, distributed algorithms.
\end{IEEEkeywords}

\section{Introduction}
Nowadays, with the rapid development of Internet of Things (IoT), distributed information processing and decision making over networks have been highly demanded. In a typical distributed scenario, the original data or system parameters are often located in different agents which are supposed to collaboratively fulfill some global objective by communicating with others and computing over the information they have access. However, due to the usually limited communication and computation capacities of the agents in practice, the design of such distributed systems { is} extremely challenging.

%Due to the rapid acquisition of data, computation and communication ability of agents, it is more energy-efficient and low latency to apply distributed algorithms in this scenario.

One such well-known typical task is to solve the distributed optimization problem which widely exists in many areas such as machine learning and signal processing. In this problem, each node aims to optimize the global objective function through minimizing its local objective function and exchanging information with others, as represented in the following form:
\begin{equation}
	\min_{\bm{x}} \sum_{i=1}^{N} \phi_i(\bm{x}),
\end{equation}
where $\bm{x}\in\mathds{R}^{M}$ is the global variable to be optimized; $ \phi_i(\bm{x})=f_i(\bm{x})+\lambda r(\bm{x})$ is the local objective function of node $i$, which is composed of a smooth component $f_i(\bm{x})$ and a regularizer of $\ell_1$-norm or $\ell_2$-norm $r(\bm{x})$.

%$\phi_i(\bm{x})$ is the local objective function of node $i$, which is composed of a smooth component $f_i(\bm{x}_i)$ and a regularizer, i.e., $ \phi_i(\bm{x})=f_i(\bm{x})+\lambda r(\bm{x})$ where $r(\bm{x})$ denotes the regularizer of $\ell_1$-norm or $\ell_2$-norm. $N$ denotes the number of nodes in the network.

%\subsection{Related Works}
Basically, there are two types of algorithms to solve the above problem: gradient-based and dual decomposition-based.
Algorithms based on gradient or subgradient \cite{subg1, subg2, subg3} converge to the consensus optimal value by iteratively computing gradient or subgradient and then averaging among nodes. These algorithms play an important role in distributed optimization, but they have slow convergence rate in general. In contrast, dual decomposition based algorithms, like the  alternating direction method of multipliers (ADMM), can solve this problem with faster convergence by properly exploiting the problem structure \cite{admm}, and therefore have attracted lots of attention in recent years.

%\begin{figure} [!htp]
%\vspace{-0.15 cm}
%    \centering
%	  \subfloat[centralized network]{
%       \includegraphics[width=0.45\linewidth]{./pics/network1.eps}}
%    \label{1a}\hfill
%	  \subfloat[decentralized network]{
%        \includegraphics[width=0.45\linewidth]{./pics/network2.eps}}
%    \label{1b}\\
%	  \caption{Network topologies}
%	  \label{fig1}
%\vspace{-0.3 cm}
%\end{figure}

ADMM is initially implemented in a centralized network which has one central processor communicating with all agent nodes, aggregating their messages and processing the total information (see Fig. \ref{1a}). In such a scenario, the central node does not perform global average until receiving all the messages from all the nodes, which may lead to a long latency and lack of robustness to the processing errors of agent nodes. Compared with the centralized realization, the decentralized counterpart (see Fig. \ref{1b}) has no central processor and each node communicates with its neighboring nodes in a timely parallel manner, which fully exploits the network connectivity and makes the system potentially less sensitive to processing failures of agent nodes. Much research effort has been focused on the decentralized ADMM algorithm. Zhu et al. put forward a fully distributed decoding algorithm based on decentralized ADMM and verified its stability \cite{decode}. Mateos et al. applied the decentralized ADMM to deal with the linear regression problem \cite{regre}. In \cite{lrate}, the distributed ADMM is proven to converge linearly when the local objective function is strongly-convex,  while \cite{rate} shows that distributed converge with rate $\emph{O}(1/T)$ under a weaker assumption that the local function is convex. Zhang et al. extended the distributed ADMM algorithm to asynchrony scenario \cite{Asyn}, and Chang et al. provided the convergence analysis of asynchronous distributed ADMM in \cite{asyrate1, asyrate2}.
%\begin{figure} [!htp]
%%\vspace{-0.45 cm}
%    \label{ADMMstructure}
%    \centering
%	  \subfloat[centralized network]{
%       \includegraphics[width=0.46\linewidth]{./pics/fig1_1.eps}\label{1a}}
%	  \subfloat[decentralized network]{
%        \includegraphics[width=0.46\linewidth]{./pics/fig1_2.eps}\label{1b}}
%	  \caption{Illustration of the centralized and decentralized ADMM.}
%%\vspace{-0.2 cm}
%\end{figure}

\begin{figure}[!ht]
\centering
\subfigure[centralized network]{
\includegraphics[width=0.23\textwidth]{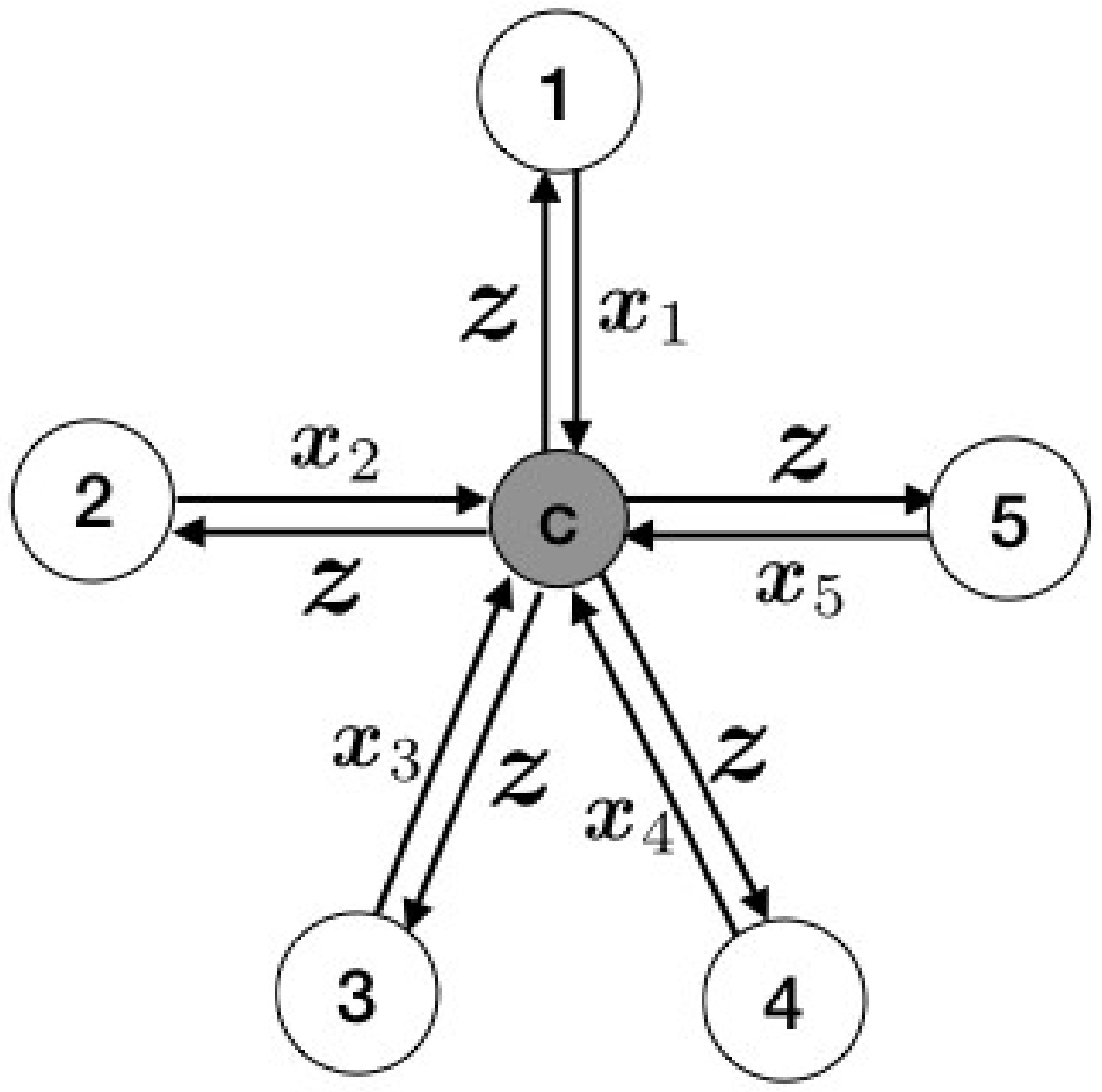}\label{1a}}
\subfigure[decentralized network]{
\includegraphics[width=0.23\textwidth]{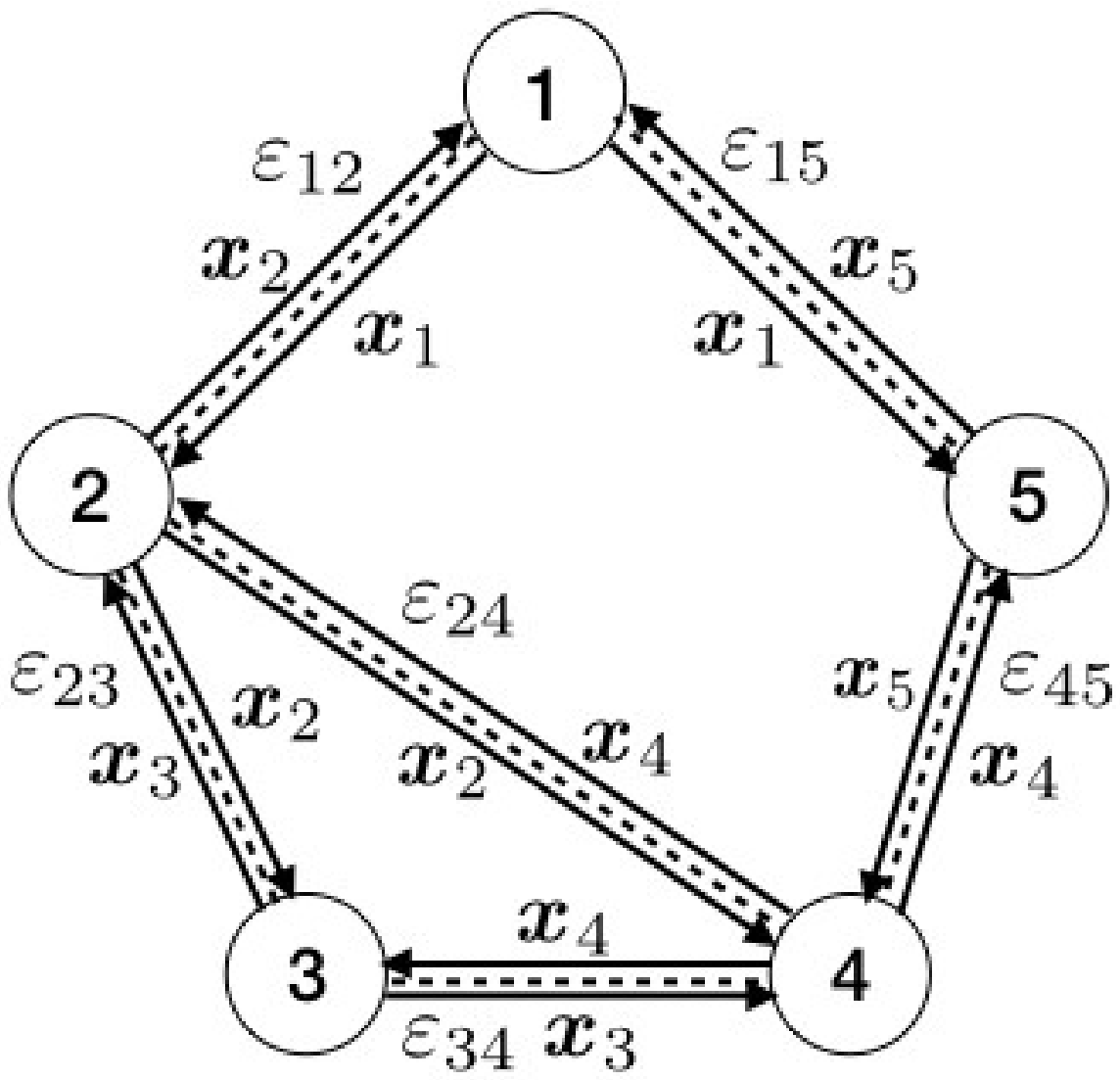}\label{1b}}
\caption{Illustration of the centralized and decentralized ADMM.}
\vspace{-0.3 cm}
\end{figure}

As both the network and data dimensions scale up, {reducing the transmission and processing cost has been the major concern in network protocol and algorithm design.
%The overall cost can be divided into the computation cost from data processing and the communication cost resulted from information transferring, both of which can be affected by the iteration number of the algorithm.
Much research effort has been put on saving the cost when implementing distributed ADMM under stringent resource constraints. This can be typically classified into two ways: reducing the number of iterations either theoretically or empirically and reducing the cost in each iteration with less transmission or less computation load.
Considering accelerating the algorithm and reducing the number of iterations, much work is based on tuning the penalty parameter, which can highly affect the efficiency of ADMM. One adaptive approach for penalty selection is consensus residual balancing (CRB) \cite{CRB1, CRB2}, which adjusts the penalty parameter so that the local derivatives of the Lagrangian w.r.t. primal and dual variables have similar magnitudes in each node.
On the other hand, some researchers focus on reducing the cost in each iteration. Two algorithms, IC-ADMM and IDC-ADMM, proposed by Chang et al. in \cite{Inexact} to alleviate the computation load, work in a simpler and more efficient way by using the proximal gradient. Zhu et al. gave a distributed ADMM with quantized communication to save the communication cost in \cite{Quantize_ADMM}, and the algorithms proposed in \cite{Wadmm, Communication-Censor} are based on communication censoring, reducing the communication links to save the cost.
The authors of \cite{Wadmm} proposed a weighted distributed ADMM to reduce communication usage through tuning the weight matrices of neighboring nodes. The algorithm maximizes the convergence speed under the constraint of communication arc number, which lessens the communication burden.
In \cite{Communication-Censor}, the communication-censored ADMM algorithm reduces the communication links by not allowing one node to transmit its local variable if the variable does not sufficiently differ from the last transmitted one.
}

%To the best of our knowledge, there is few work considering the tradeoff between the communication and computation processes in the very scenario of distributed ADMM, although there have been some works focusing on both the communication and computation costs for some distributed algorithms, usually with a central aggregator, like MapReduce \cite{comeff1, comeff2, comeff3} and federated learning \cite{AFL}, etc. However, due to the distinct nature and objective of distributed ADMM, especially its fully distributed structure and the stringent global consensus as well as the special messages exchanged between the nodes, the problem is quite different.
%Wang et al. addressed the relationship among communication, computation and training accuracy for the federated learning system and derives the iteration times between two aggregations under a given resource budget \cite{AFL}.

In this paper, {we consider the problem of saving the total cost for distributed ADMM}.
In such a scenario of distributed ADMM, nodes need to exchange information frequently to achieve global consensus.
%Intuitively, the communication cost can be saved by less communication among nodes, which however may lead to higher total cost due to much higher computation cost resulted from slower convergence rate and larger number of iterations.
Intuitively, the communication cost can be saved by reducing the interaction between nodes. However, the resultant slower convergence rate and the larger number of iterations would cause the sharp increase in the computation cost, thus leading to a much higher total cost.
To save the total cost of the system, we propose a distributed ADMM with synergetic communication and computation (SCCD-ADMM), which makes the tradeoff between communication and computation while implementing distributed optimization.
%In particular, each node tunes the number of communication nodes according to a heuristic searching procedure, which aims to reduce the communication cost as much as possible under acceptable computation cost. Then each node chooses its communication nodes based on a derived sampling distribution, after which it updates the local information with a newly-proposed update rule.
Explicitly, each node only exchanges information with a fraction of its neighboring nodes. The number of communication nodes is determined by a heuristic searching procedure, which aims to reduce the communication cost as much as possible with acceptable computation cost.
Then the specific nodes to communicate with are chosen following the derived sampling distribution, and finally the node updates the local information with the newly-proposed update rule.

{Our work shares some similarities with \cite{Wadmm, Communication-Censor}, since we all focus on reducing the communication cost in each iteration.
However, the algorithm in \cite{Wadmm} assumes that the local function is convex and differentiable, meaning that it cannot deal with $\ell_1$-regularized objectives. Besides, the algorithm ignores the computation cost and the optimization of weight matrices is hard to implement in the practice. \cite{Communication-Censor} decides the transmission of one node by its variables' distance from the old version in the time dimension, while we consider selecting a subset of communication nodes to receive the messages, based on the difference of variables in the space dimension (among different nodes). Additionally, in SCCD-ADMM, the searching and sampling procedure makes the real communication network different in each iteration, which is a process of network topology learning to some extent. Similar idea appears in \cite{networklearning}, which focuses on providing faster model averaging for decentralized parallel stochastic gradient descent.}

The contributions of the paper can be summarized as follows:
\begin{itemize}
\item{We propose a distributed ADMM algorithm with new update rules, in which each node only interacts with part of its neighboring nodes. The amount of neighboring nodes is progressively determined according to a heuristic searching procedure, which takes into account both the predicted convergence rate and the communication and computation costs at each iteration, showing a trade-off between communication and computation.}
%Based on the traditional distributed ADMM, a local update rule followed by each node is designed referring to stochastic optimization techniques.
%The refined updated rule is adaptive to variable numbers of the interactive neighboring nodes.
%The update rule is proved to be converged in expectation and the variance brought by randomness is analyzed.
%We prove its convergence in expectation rigorously, and also analyze the variance brought by randomness.
\item{We design a random distribution based on importance sampling and with the locally stored latest information for each node to choose the subset of neighboring nodes to communicate with. Based on this sampling criteria, the algorithm converges faster and promises a lower computation cost with the same communication cost.}
%To reduce the communication cost, each node chooses a subset of neighboring nodes instead of all of them to communicate with.
%The number of communication nodes is adaptively adjusted according to the proposed searching procedure, which takes both the convergence rate and the cost consumed into consideration, showing a trade-off between communication and computation.}
\item{We prove the convergence of the proposed algorithm and analyze the upper bound of the variance brought by randomness. We also provide extensive simulation which shows the resultant excellent performance in terms of convergence rate and variance, the overall communication and computation cost, the impact of network topology, the delay, etc.  }
\end{itemize}

This paper is organized as follows. In Section \ref{preliminary}, the optimization problem is formulated and the traditional distributed ADMM algorithm (D-ADMM) applied in the decentralized network is reviewed.
Section \ref{algorithm} elaborates on our proposed SCCD-ADMM algorithm, which is consist of three steps in each iteration. In Section \ref{analysis}, we give a theoretical analysis of the proposed algorithm in terms of convergence rate, the variance bound as well as the sampling distribution. Numerical experiments are given in Section \ref{experiment} to validate the convergence of the algorithm and the reduction of the total cost compared with the traditional way. {Moreover, the impact of the network topology and the delay is evaluated experimentally.}
Section \ref{conclusion} concludes the paper and discusses some possible future work.

{Note that this paper significantly extends our previous work \cite{SCCD} in several ways. Firstly, we give the theoretical analysis of the convergence property in both expectation and variance aspects, and derive the sampling distribution for each node, which verifies the intuitive distribution expression given in \cite{SCCD}. Secondly, a more comprehensive set of experiments is shown including the impact of searching stepsize, network topology as well as the delay comparison.}

\section{Preliminary}\label{preliminary}
\subsection{Network Model and Assumptions}
We consider a fully-distributed multi-agent network and represent it with an undirected network $\mathcal{G}=(\mathcal{V},\mathcal{E})$ as shown in Fig. 1{(b)}, where $\mathcal{V}=\{1,...,N\}$ denotes the set of $N$ nodes, and the edge set $\mathcal{E}=\{\varepsilon_{ij}\}_{i,j \in \mathcal{V}} $ indicates the communication links between nodes {as shown by dash lines in Fig. 1{(b)}}. We define the adjacency matrix of $\mathcal{G}$ as $\boldsymbol{W}$, where $[\boldsymbol{W}]_{i,j}=1$ if $\varepsilon_{ij}\in\mathcal{E}$ and $[\boldsymbol{W}]_{i,j}=0$ otherwise. $\mathcal{N}_i$ is the set of node $i$'s neighboring nodes and the diagonal degree matrix is defined as $\boldsymbol{D}=diag\{d_1,...,d_N\}$ and $d_i=|\mathcal{N}_i|$. The network is assumed to be connected, i.e., there exists a path between any pair of vertices.

{Each node aims to solve the problem (1) through communicating with its neighboring nodes. The unit communication cost, denoted by $C_\text{cmm}$, is defined as the energy consumption between two neighboring nodes to transfer a package of data. Likewise, the unit computation cost, denoted by $C_\text{cmp}$, is defined as the energy consumption of one node's updating for one package of data in one iteration. The total cost of the system is thus the cumulated communication and computation costs as the algorithm runs.
% The purpose of this paper is to save the total cost of the system while solving the optimization problem (1).
Three assumptions are made as follows:}

\newtheorem{assumption}{Assumption}
\begin{assumption}
The regularizer $r(\cdot)$ is convex.
\end{assumption}
\begin{assumption}
For all $i\in\mathcal{V}$, the function $f_i: \mathbb{R}^M\to\mathbb{R}$ in (1) is $\sigma$-strongly convex, i.e., there exists some $\sigma>0$ such that
\begin{equation}
f_i(\bm{u})\ge f_i(\bm{v})+\triangledown f_i(\bm{v})^T(\bm{u}-\bm{v})+\frac{\sigma}{2}\|\bm{u}-\bm{v}\|^2, \forall \bm{u},\bm{v}\in\mathbb{R}^M.
\end{equation}
In addition, $f_i$ is $(1/\gamma)$-smooth, which means that it has Lipschitz continuous gradients, i.e., there exists some $\gamma>0$ satisfying
\begin{equation}
f_i(\bm{u})\le f_i(\bm{v})+\triangledown f_i(\bm{v})^T(\bm{u}-\bm{v})+\frac{1}{2\gamma}\|\bm{u}-\bm{v}\|^2, \forall \bm{u},\bm{v}\in\mathbb{R}^M.
\end{equation}
\end{assumption}
\begin{assumption}
{The unit communication cost between each transmit-receive pair is the same and constant for all iterations. The unit computation cost is the same for each node and remains constant for all computation iterations.}
\end{assumption}
Assumption 1 can be satisfied for both $\ell_1$-norm and $\ell_2$-norm and Assumption 2 is necessary for the convergence of the algorithm. Assumption 3 guarantees that when choosing the neighboring nodes to communicate with, we can only consider the importance of the information of different neighboring nodes and neglect the discrepancy in the cost of communication links. The same computation cost in Assumption 3 is assumed for simplicity. {Note that the values of $C_\text{cmm}$ and $C_\text{cmp}$ are related to the number of bits transferred and the number of flops consumed. They are measured in advance in the practical implementation and thus are treated as known in this paper.}

\subsection{Traditional Distributed ADMM}
In the decentralized network, the structured formula (1) is not decomposable since the variable $\bm{x}$ is global. Thus the consensus variable is introduced into the optimization expression. Specifically, with $\bm{x}_i$ and $\bm{x}_j$ being the local variables in the node $i\in\mathcal{V}$ and its neighboring node $j\in{\mathcal{N}_i}$, the consensus variable $\bm{t}_{ij}$ is used to guarantee $\bm{x}_i=\bm{x}_j$, and then the local variables in different nodes through the network can converge to the equal values. On this basis, the optimization problem is reformulated as follows:
\begin{equation}
\begin{aligned}
	&\min_{\bm{x}_i}\quad \sum_{i=1}^{N} \Big [f_i(\bm{x}_i)+\lambda r(\bm{x}_i) \Big ], \\
	&s.t.{ \quad\bm{x}_i=\bm{t}_{ij},\bm{x}_j=\bm{t}_{ij} \quad\forall j\in {\mathcal{N}_i}, i\in\mathcal{V}.}
\end{aligned}
\end{equation}
Define $\phi_i(\bm{x}_i)\triangleq f_i(\bm{x}_i)+\lambda r(\bm{x}_i)$. Based on (2), we have the augmented Lagrange function as:
\begin{equation}
\begin{aligned}
\mathcal{L}&=\sum_{i=1}^N \phi_i(\bm{x}_i) +\sum_{i=1}^{N}\sum_{j\in{\mathcal{N}_i}}\Big [\bm{u}_{ij}^T(\bm{x}_i-\bm{t}_{ij})+\bm{v}_{ij}^T(\bm{x}_j-\bm{t}_{ij})\Big ]\\
&+\frac{c}{2}\sum_{i=1}^{N}\sum_{j\in{\mathcal{N}_i}}\Big [\|\bm{x}_i-\bm{t}_{ij}\|_2^2+\|\bm{x}_j-\bm{t}_{ij}\|_2^2\Big ],
\end{aligned}
\end{equation}
where $\bm{u}_{ij}$ and $\bm{v}_{ij}$ are Lagrange dual variables, $c$ is the penalty parameter and the last term is used to promote robustness.
At the $k$-th iteration and for $i\in\mathcal{V}$, define $\bm{p}_{i}^{(k)}\triangleq \sum_{j\in{\mathcal{N}_i}}(\bm{u}_{ij}^{(k)}+\bm{v}_{ji}^{(k)})$ with the initial condition $\bm{p}_{i}^{(0)}=\bm{0}$, then the update rule of the traditional distributed ADMM (i.e., D-ADMM) is \cite{regre}:
\begin{subequations}
\begin{flalign}\label{pupdateold}
\quad \bm{p}_i^{(k)}&=\bm{p}_{i}^{(k-1)}+c\sum_{j\in{\mathcal{N}_i}}(\bm{x}_i^{(k-1)}-\bm{x}_j^{(k-1)}),&
\end{flalign}
\begin{flalign}\label{xupdateold}
\begin{split}
\quad \bm{x}_{i}^{(k)}=&\mathop{\arg\min}_{\bm{x}_i} \ \ \{f_i(\bm{x}_i)+\lambda r(\bm{x}_i)+\bm{x}_{i}^T\bm{p}_{i}^{(k)}+\\
&c\sum_{j\in{\mathcal{N}_i}}\|{\bm{x}_i-\frac{\bm{x}_i^{(k-1)}+\bm{x}_j^{(k-1)}}{2}}\|_2^2\}, \forall i\in\mathcal{V},
\end{split}&
\end{flalign}
\end{subequations}
where $c$ is the penalty parameter.

In each iteration, each node updates its variables according to (6) in a fully distributed way, after which it communicates with all its neighboring nodes to exchange the update results.
%After some iterations, the algorithm gets converged and the problem is globally optimized.
The algorithm iterates until it converges and the problem is globally optimized.
However, this traditional distributed ADMM algorithm requires each node to communicate with all of its neighbors, which is inefficient in the large scale scenario {in terms of energy consumption.}

{
%In the pratical implementation, the values of the communication and computation costs, related to the bits transferred and the number of flops, are measured ahead of time.
% in the communication stage and computation process respectively.
Our proposed algorithm aims to improve the energy efficiency of the distributed ADMM algorithm and save the total cost of the system by reducing communication nodes at each iteration. There are several challenges to overcome. Firstly, since fewer communication nodes may lead to more iterations and increased overall computation cost, how to conduct the tradeoff between communication and computation, or namely, how to decide the number of communication nodes on the premise of an acceptable convergence rate, should be considered. Secondly, how to select the nodes to communicate with given the number of nodes and how to re-design the update rule suitable for them, should also be considered.
%The problem we consider in the paper is how to save the total cost of the system while solving the optimization problem (1).
In the next section, we will develop an algorithm to solve the problems.}

\section{Algorithm Derivation} \label{algorithm}
%In this section, we develop the distributed ADMM algorithm with synergetic communication and computation (SCCD-ADMM) {to tackle the challenges mentioned before and save the total cost of the system.}
In our proposed algorithm, each node follows three steps in each iteration: searching the number of communication nodes, selecting the nodes to communicate with and updating the local variables. Explicitly, each node first decides an appropriate node number by implementing a local searching procedure which attempts to minimize the total cost based on the latest updates received from this number of neighboring nodes. Then the node selects this number of neighboring nodes according to a certain importance sampling distribution which is also calculated using the latest updates received. Finally, after communicating with these nodes and getting their updates, the node updates its own variable estimates based on new update rules.
In the following, for better presentation, we first give the update rules of the algorithm, which is the basis of the work. After that, the searching process and the selecting criterion are described in detail.

\subsection{Update Rules}
%In each iteration, the node $i$ receives messages from a subset of its neighboring nodes, which are called the communication nodes of node $i$ in this paper.
To lessen the communication load in the proposed SCCD-ADMM, the node $i$ receives messages from a subset of its neighboring nodes, instead of all of them as in D-ADMM, to update its estimation of the variables. However, the update rules have to be changed accordingly to ensure convergence.

Given the number of the communication nodes $Num_i$ and the set of the chosen nodes $\mathcal{N}_{\text{c},i} \subseteq {\mathcal{N}_i}$, $|\mathcal{N}_{\text{c},i}|= Num_i$, we first consider the update rule for $\bm{p}_i$.
%If node $j$ is selected to transfer message to node $i$,
When node $j\in {\mathcal{N}_{\text{c},i}}$ is selected to transfer message to node $i$ in the $k$-th iteration, with this message the gradient term in (\ref{pupdateold}) can be estimated by
%\begin{equation} \label{gradientp}
%\nabla{\mathcal{L}(\bm{p}_i)}=
$$\frac{(\bm{x}_i^{(k-1)}-\bm{x}_j^{(k-1)})}{w_{ij}^{(k)}},$$
%\end{equation}
where $w_{ij}^{(k)}$ is the probability for node $i$ to select node $j$ to communicate with at iteration $k$ as determined in Section \ref{ChooseNode}. Note that this estimation is unbiased since
\begin{equation}\nonumber
\begin{aligned}
\mathbb{E}\Big[\frac{\bm{x}_i^{(k-1)}-\bm{x}_j^{(k-1)}}{w_{ij}^{(k)}}\Big]
&=\sum_{j\in\mathcal{N}_{i}}w_{ij}^{(k)}\Big[\frac{\bm{x}_i^{(k-1)}-\bm{x}_j^{(k-1)}}{w_{ij}^{(k)}}\Big]\\
&=\sum_{j\in\mathcal{N}_{i}}(\bm{x}_i^{(k-1)}-\bm{x}_j^{(k-1)}).
\end{aligned}
\end{equation}
Then averaging over all $Num_i$ messages received from the selected set of neighboring nodes $\mathcal{N}_{\text{c},i}$, $\bm{p}_i$ can be updated in a gradient descent way as follows:
\begin{subequations}
\begin{equation}\label{pupdate}
	\bm{p}_{i}^{(k)}=\bm{p}_{i}^{(k-1)}+\frac{c}{Num_i}\sum_{j\in{\mathcal{N}_{\text{c},i}}}\frac{(\bm{x}_i^{(k-1)}-\bm{x}_j^{(k-1)})}{w_{ij}^{(k)}}.
\end{equation}
%Note that in (\ref{pupdate}) the divisor $w_{ij}^{(k)}$ ensures the estimation to be unbiased, i.e., the expectation of $\bm{p}_{i}^{(k)}$ is the same as that of (\ref{pupdateold}), as shown in \textbf{Lemma \ref{punbiased}} in Section \ref{analysis}.

Similarly, when it comes to the $\bm{x}$-update, as in the $\bm{p}$-update, each node only receives the messages from the selected subset of neighboring nodes and thus the last term in (\ref{xupdateold}) turns into
\begin{equation}\label{hupdate}
h_{i,j}^{(k)}(\bm{x}_i)=\frac{c}{Num_i}\sum_{j\in{\mathcal{N}_{\text{c},i}}}\frac{\|\bm{x}_i-\frac{1}{2}(\bm{x}_i^{(k-1)}+\bm{x}_j^{(k-1)})\|_2^2}{w_{ij}^{(k)}}.
\end{equation}
To reduce the deviation caused by the randomness in node selection and increase the convergence performance of the algorithm, inspired by the idea in \cite{Imsam}, we replace $h_{i,j}^{(k)}(\bm{x}_i)$ with its first-order prediction at the $(k-1)$-th iteration, i.e., $\nabla h_{i,j}^{(k)}(\bm{x}_i^{(k-1)})\times\bm{x}_i$, and add a quadratic term $\frac{1}{\eta_k}\|\bm{x}_i-\bm{x}_{i}^{(k-1)}\|_2^2$ to the $\bm{x}$-update. Note that as the algorithm runs, $h_{i,j}^{(k)}(\bm{x}_i)$ approaches $0$, so a first-order approximation is sufficient. The quadratic term is a { proximal term}, which restricts the update result $\bm{x}_i^{(k)}$ to be close to $\bm{x}_i^{(k-1)}$ and thus helps to lower the variance and promise the convergence of the algorithm. The parameter ${\eta_k}$ in the term can affect the convergence of the algorithm and we set it as \cite{Stoadmm}: ${\eta_k}=\frac{D}{\sqrt{2k}}$, where $D$ is a constant and its value is given experimentally and $k$ is the iteration number.

On this basis, the update rule of $\bm{x}$ is as follows:
\begin{equation}
\begin{aligned}\label{xupdate}
	\bm{x}_{i}^{(k)}&=\mathop{\arg\min}_{\bm{x}_i} \Big \{f_i(\bm{x}_i)+\lambda r(\bm{x}_i)+\bm{x}_i^{T}\bm{p}_i^{(k)}+\\
	&\nabla h_{i,j}^{(k)}(\bm{x}_i^{(k-1)})\times\bm{x}_i + \frac{\|\bm{x}_i-\bm{x}_{i}^{(k-1)}\|_2^2}{\eta_k}\Big \}.
\end{aligned}
\end{equation}
\end{subequations}

The update of $\bm{x}_{i}^{(k)}$ in (\ref{xupdate}) can be easily accomplished. Specifically, when the objective is regularized by $\ell_2$-norm, i.e., $r(\bm{x}_i)=\|\bm{x}_i\|_2$, $\bm{x}_{i}^{(k)}$ can be obtained using gradient descent way since the optimization {objective} of (\ref{xupdate}) is convex and differential. When it comes to $\ell_1$-norm, i.e., $r(\bm{x}_i)=\|\bm{x}_i\|_1$, problem (\ref{xupdate}) can also be solved efficiently with fast iterative shrinkage thresholding algorithm (FISTA) \cite{FISTA}.

%Node $i$ updates $\bm{p}_i^{(k)}$ and $\bm{x}_i^{(k)}$ locally according to the above proposed rule (8a) and (\ref{xupdate}).
%Here it is also worth mentioning that although SCCD-ADMM converges slower than D-ADMM because of the approximation with the first order and the Bregman divergence term of $\bm{x}$-update, it has little impact on the performance of SCCD-ADMM.

\subsection{Decide the Number of Communication Nodes}\label{DecideNumber}
%Since a smaller number of communication nodes indicates less communication cost yet higher computation cost on the other hand,
Although a smaller number of communication nodes can save the {iteration-wise} communication cost, {it cannot necessarily save the overall communication cost. Instead, the resultant computation cost may be much higher due to larger iteration number, thus causing an unbearable total cost.} Hence, it is essential for each node to make a dedicate decision $Num_i$, the amount of communication nodes, with the aim to minimize the total communication and computation cost or trade off between them.
{A direct method maybe to optimize $Num_i$ and fix it during the iterations. However, it is hard to choose the well-performed number due to the difficulty of deriving the convergence rate w.r.t $Num_i$.
Moreover, the properties and the distributions of the variables change continuously as the iteration goes on, and thus it is essential to design an adaptive searching method for capturing the dynamic behavior and achieving a better trade-off between the  communication and computation cost.}
To this end, a searching procedure is designed, in which each node makes progressive search attempts for different numbers of neighboring nodes by evaluating a well-designed measure function which reflects the overall computation and communication cost.

%The computation cost in the algorithm depends on two aspects: the convergence rate of the algorithm and the searching process which includes the updating cost, while the communication cost arises from transferring messages.

%In one iteration, the searching procedure simulates the conditions of different $Num_i$ using locally-stored information, calculates the corresponding evaluation functions and chooses the $Num_i$ for the present iteration.

Note that the overall computation and communication cost in fact involves two aspects: 1) the number of iterations needed to achieve the convergence, 2) the computation and communication cost paid for each iteration. Although the convergence rate of the algorithm is hard to be precisely calculated, it can be roughly predicted by the decrease of the objective value or the consensus error in each iteration, i.e., a larger decrease of the objective value or consensus error in one iteration means faster convergence or less iterations needed to converge.

Keeping this in mind, firstly, we measure the convergence tendency of the algorithm for node $i$ in the $k$-th iteration by $\Gamma_i^{(k)}(Num^{(s)}_i)$ in which $Num^{(s)}_i$ is the attempted communication node number at its $s$-th search attempt. {Note that the convergence tendency depends on the one changing more slowly between the accuracy and the consensus error.} $\Gamma_i^{(k)}(Num^{(s)}_i)$ is determined as follows:

%for node $i$'s every node number search attempt in the $k$-th iteration with $\Gamma_i^{(k)}(Num^{(s)}_i)$ in which $Num^{(s)}_i$ is the attempted node number at the $s$-th search attempt.

%It is determined as follows:
%in which $Num^{(s)}_i$ is the attempted number of neighboring nodes at the $s$-th search attempt
\begin{itemize}
\item{When the objective is regularized by $\ell_2$ norm, {the main constraint for convergence is the accuracy decided by the objective value. So}
$\Gamma_i^{(k)}(Num^{(s)}_i)$ is defined as the predicted decrease of the objective value w.r.t.  the estimate at the previous iteration, i.e., $\Gamma_i^{(k)}(Num^{(s)}_i)=\vert\phi_i(\bm{x}_i^{(k-1)})-\phi_i(\hat{\bm{x}}_i^{(k)}(Num^{(s)}_i))\vert$, where $\vert\cdot\vert$ denotes taking the absolute value to cope with the seldom fluctuation, and $\hat{\bm{x}}_i^{(k)}(Num^{(s)}_i)$ is the estimated variables at the $k$-th iteration using the attempted node number $Num^{(s)}_i$.}
\item{When the regularizer is $\ell_1$ norm, {the consensus error becomes the criterion changing more slowly due to the sparsity introduced by $\ell_1$ norm. So}
$\Gamma_i^{(k)}(Num^{(s)}_i)$ is the predicted decrease of the consensus error, i.e., $\Gamma_i^{(k)}(Num^{(s)}_i)=\vert \|\bar{\bm{x}}_{\mathcal{N}_i}^{(k-1)}-{\bm{x}}_i^{(k-1)}\|_2 - \|\bar{\bm{x}}_{\mathcal{N}_i}^{(k)}-\hat{\bm{x}}_i^{(k)}(Num^{(s)}_i)\|_2\vert$, where $\bar{\bm{x}}_{\mathcal{N}_i}^{(k)}$ denotes the average of all neighboring nodes' values.}
\end{itemize}
To sum up, the convergence measure is expressed as follows:
\begin{equation}\label{Gamma}
\begin{aligned}
&\Gamma_i^{(k)}(Num^{(s)}_i)\\
&=\left\{
\begin{array}{rl}
\vert\phi_i(\bm{x}_i^{(k-1)})-\phi_i(\hat{\bm{x}}_i^{(k)}(Num^{(s)}_i))\vert, \\\text{for}~{r(\cdot)=\|\cdot\|_2};\\
\vert \|\bar{\bm{x}}_{\mathcal{N}_i}^{(k-1)}-{\bm{x}}_i^{(k-1)}\|_2 - \|\bar{\bm{x}}_{\mathcal{N}_i}^{(k)}-\hat{\bm{x}}_i^{(k)}(Num^{(s)}_i)\|_2\vert,\\ \text{for}~{r(\cdot)=\|\cdot\|_1}.
\end{array}\right.
\end{aligned}
\end{equation}
%and it is expected to be minimized in each iteration.

Note that (\ref{Gamma}) depends on the updated result $\hat{\bm{x}}_i^{(k)}(Num^{(s)}_i)$ which relies on updates received from $Num_i^{(s)}$ neighboring nodes.
%Then we talk about how to get the variable $\bm{x}_i^{(s)}$ used in the function. In the $s$-th search depth, node $i$ simulates the updating result of $\bm{x}_i^{(s)}$ with $\hat{Num}_i^{(s)}$ communication nodes using locally-stored information.
%When calculating  $\bm{x}_i^{(s)}$, node $i$ needs the information from neighboring nodes.
However, if node $i$ actually receives messages from its neighboring nodes in every search attempt, it will lead to high communication cost. In order to avoid extra communication cost in the searching process, we let node $i$ choose $Num_i^{(s)}$ neighboring nodes with the sampling distribution provided in Section \ref{ChooseNode} and update $\hat{\bm{x}}_i^{(k)}(Num^{(s)}_i)$ with the latest locally-stored information of these nodes.
%we consider using the locally-stored information of its neighboring nodes and the sampling distribution is the same as illustrated in subsection C.
%To be concrete, node $i$ initializes the variables of its neighboring nodes and stores them locally, which remains unchanged until node $i$ receives the messages from the corresponding neighboring node.
Likewise, $\bar{\bm{x}}_{\mathcal{N}_i}^{(k)}$ in (\ref{Gamma}) is also estimated using the latest information stored in node $i$. {In this way, the searching procedure has no communication complexity.}
%This method can avoid extra communication cost in the searching process.

Secondly, we measure the computation and communication costs in each iteration. The computation cost mainly results from the updating processes in the searching procedure, which include both the $\bm{x}$-update and the $\bm{p}$-update and are supposed to be of the same computation cost for all nodes according to Assumption 3.
%Compared with the simple gradient descent for $\bm{p}$-update, the computation of $\bm{x}$-update has higher complexity, which dominates the computation cost for updating.
%The searching cost can be expressed by the updating cost, since in each searching step, node $i$ needs to update $\bm{x}_i^{(s)}$ based on $\hat{Num}_i^{(s)}$ neighboring nodes.
Since in each search attempt the computation cost mainly comes from the calculation of $\hat{\bm{x}}_i^{(k)}(Num^{(s)}_i)$, if node $i$ stops its searching at the $s$-th attempt in the $k$-th iteration, it results in $(s+1)\times C_\text{cmp}$ units of computation cost, taking into account the final $\bm{x}$-update in that iteration.
%Thirdly, we consider the communication cost.
%In the communication process, since the set of communication nodes is decided by node $i$ and the neighboring nodes do not know whether they need to transfer messages, node $i$ has to send a trigger bit to the chosen communication nodes. However, the cost of sending messages dominates the communication cost and thus we would ignore the cost of the triggering process.
It is easy to estimate the communication cost. Using $C_{\text{cmm},i}$ to denote the total communication cost for node $i$ in one iteration, again, with Assumption 3, we have $C_{\text{cmm},i}= Num_i^{(s)} \times C_\text{cmm}$.
{Note that in the practical implementation, the value of $C_\text{cmp}$ and $C_\text{cmm}$ are measured in advance to assist the searching process.}

Now we discuss the overall evaluation function. As mentioned above, $\Gamma_i^{(k)}(Num^{(s)}_i)$ reflects to some extent the convergence rate at each search attempt. More concretely, the predicted number of iterations needed under the current setting of communication node number is roughly proportional to the inverse of $\Gamma_i^{(k)}(Num^{(s)}_i)$. Therefore, the overall evaluation function used to determine the best number of communication nodes can be defined as:
%\begin{equation}\label{evalfunction}
%	E_i^{(k)}(Num_i^{(s)})\triangleq \Gamma_i^{(k)}(Num^{(s)}_i)+\alpha(s\times C_\text{cmp} + Num_i^{(s)}\times C_\text{cmm}).
%\end{equation}
\begin{equation}\label{evalfunction}
	E_i^{(k)}(Num_i^{(s)})\triangleq \frac{(s+1)\times C_\text{cmp} + Num_i^{(s)}\times C_\text{cmm}}{\Gamma_i^{(k)}(Num^{(s)}_i)}.
\end{equation}
%where $\alpha$ is the parameter used to adjust the relative impact of the costs w.r.t. the convergence. The larger $\alpha$, the more important part the costs play in determining $Num^{(s)}_i$. In the practical implementation, $\alpha$ is given through experiments considering the relative impact of the two terms.
%Since our purpose is to minimize the total cost and optimize the objective $\phi_i$ at the same time, the evaluation function $E_i(Num_i)$ should be as low as possible.

Finally, the searching process is described as follows: in each iteration, node $i$ makes a series of attempts to find an appropriate communication node number in a descending manner. In the $s$-th searching attempt, node $i$ chooses $Num_i^{(s)}$ neighboring nodes according to the process described in Section \ref{ChooseNode}, and then calculates the corresponding evaluation function (\ref{evalfunction}) based on the latest locally stored messages from these nodes. The attempts continue with reduced $Num_i^{(s)}$ until the evaluation function starts to increase, i.e., $E_i(Num_i^{(s)}) > E_i(Num_i^{(s-1)})$, or the node number could not decrease any more. Note that to ensure the independency of information transferred among nodes as in message-passing based algorithms, for each iteration and in average, each edge is chosen by one of its two nodes, which means that the initial searching number of communication nodes for node $i$ can be set as $|\mathcal{N}_i|/2$. Also note that this stopping criterion of the searching may not lead to global optimum because the function $E_i(Num_i^{(s)})$ could be nonconvex. However, since both the numerator and the denominator in (\ref{evalfunction}) generally increase with $Num_i^{(s)}$, such a search at least has a good chance to find a local optimum even which can exhibit a good performance as shown in our experiments (see Section \ref{experiment}).

We summarize the searching procedure in \textbf{Algorithm \ref{searchnum}}.
\begin{algorithm}
\caption{\textbf{Searching Procedure}}\label{searchnum}
\textbf{Input:}  the iteration number $k$; $Num_i$ in last iteration.\\
\textbf{Output:} the number of search attempts $s$; $Num_i$ in current iteration.\\
$s \leftarrow 0$.\\
\If{$k=1$}{
	$Num_i^{(s)}\leftarrow |\mathcal{N}_i|/2$.\\
	}
\Else{
    $Num_i^{(s)}\leftarrow Num_i$.\\
    }
$s\leftarrow s+1$,\\
$Num_i^{(s)}\leftarrow {\text{max}(Num_i^{(s-1)}- stepsize, 1)}$.\\
\While {$E^{(k)}_i(Num_i^{(s)})\le E^{(k)}_i(Num_i^{(s-1)})$}{
			$s\leftarrow s+1$,\\
			$Num_i^{(s)}\leftarrow  {\text{max}(Num_i^{(s-1)}- stepsize, 1)}$.\\
		}
$Num_i \leftarrow Num_i^{(s)}$.\\
\end{algorithm}

\subsection{Choose the Communication Nodes} \label{ChooseNode}
Given the number of communication nodes, node $i$ needs to figure out which nodes to receive information from.
The selection of the communication nodes is treated as a sampling process, i.e., node $i$ samples from the set of its neighboring nodes.
In the sequel, we first consider how to choose one node to communicate with and then generalize it to the case of more than one nodes.
A widely-used sampling method is uniform sampling, promising an unbiased estimation. However, it may lead to high variance and negatively influence the convergence. So instead of uniform sampling, we consider using importance sampling technique. Importance sampling is an active sampling method applied extensively in stochastic optimization. Zhao et al. \cite{Imsam} proved that when the sampling distribution is proportional to the norm of stochastic gradient, the variance can be minimized. However, in our studied scenario, this cannot be directly applied because the update of $\bm{x}_i$ is not a simple realization of gradient descent.

Here, by minimizing the variance of the updating result, we get the sampling distribution as
\begin{equation}
\label{distribution1}
w_{ij}^{(k)}=\frac{\|\bm{x}_i^{(k-1)}-\bm{x}_j^{(k-1)}\|_2}{\sum_{l\in\mathcal{N}_{i}}\|\bm{x}_i^{(k-1)}-\bm{x}_j^{(k-1)}\|_2},
\end{equation}
and the detailed derivation can be found in Theorem 3 in Section \ref{analysis}.

In the expression of $w_{ij}^{(k)}$, the $\bm{x}$-update result of each neighboring node $j\in \mathcal{N}_{i}$ in the $(k-1)$-th iteration , i.e., $\bm{x}_j^{(k-1)}$, is required. However, node $i$ only communicates with part of the neighboring nodes in each iteration and it does not have all other neighboring nodes' information. To solve this problem, node $i$ uses the locally-stored information received from this part of neighboring nodes, in a similar way as depicted in Section \ref{DecideNumber}.
%the intermediate variables from neighbors are generated to be $\bm{0}$ and then stored in node $i$. When neighboring node $j$ is selected as a communication node, $\bm{x}_j$ arrives at node $i$ and replaces the corresponding last-stored information. Otherwise, the intermediate information remains unchanged.
It means that when estimating the distribution, node $i$ uses the latest information received from all of its neighboring nodes.
On this basis, we can modify the expression of $w_{ij}^{(k)}$ as:
\begin{equation}\label{distribution}
w_{ij}^{(k)}=\frac{\|\bm{x}_i^{(k-1)}-\bm{x}_j^{(l_j)}\|_2}{\sum_{l\in\mathcal{N}_{i}}\|\bm{x}_i^{(k-1)}-\bm{x}_j^{(l_j)}\|_2},
\end{equation}
where $\bm{x}_j^{(l_j)}$ represents the latest stored messages received from node $j$.
{Given the sampling distribution, each node needs to send the on-off triggers to its communication nodes to activate the transmission.}
%In each iteration, given the communication number $Num_i$, node $i$ chooses nodes to form the subset $\mathcal{N}_{\text{c},i}$.

%SCCD-ADMM algorithm can solve global optimization with little cost. This algorithm makes the tradeoff between communication and computation with a proper choice of $Num_i$ and communication nodes.
%Thus it is a synergetic way for communication and computation.

The above procedures compose the proposed SCCD-ADMM algorithm which is sketched in \textbf{Algorithm \ref{sccdadmm}}. Note that it stops when the accuracy and consensus error \cite{Inexact} are lower than some thresholds. The accuracy of the algorithm is measured by $acc=(obj(\bar{\bm{x}}^{(k)})-obj^{*})/obj^{*}$ and the consensus error of all nodes' optimization results is defined as $cserr=\sum_{i=1}^N\|\bar{\bm{x}}^{(k)}-\bm{x}_i^{(k)}\|_2^2/N$, where $obj^{*}$ is the optimal value of the objective and $\bar{\bm{x}}^{(k)}=(\sum_{i=1}^{N}\bm{x}_i^{(k)})/N$.

\begin{algorithm}
\caption{\textbf{SCCD-ADMM}}\label{sccdadmm}
\For{all node $i\in\mathcal{V}$ [in parallel]}
{
\textbf{Initialize:} $\bm{p}_i^{(0)}=\bm{0}$, $\bm{x}_i^{(0)}=\bm{0}$, $k=0$.
}
\While{not stopping}
{
$k=k+1$.\\
	\For{all node $i\in\mathcal{V}$ [in parallel]}
	{
	Decide the number of communication nodes $Num_i$ (Implement \textbf{Algorithm \ref{searchnum}}).\\
	Choose communication nodes $\mathcal{N}_{\text{c},i}$ according to (\ref{distribution}).\\
	Update $\bm{p}_i^{(k)}$ according to (\ref{pupdate}).\\
	Update $\bm{x}_i^{(k)}$ according to (\ref{xupdate}).
	}
}
\end{algorithm}

\section{Performance Analysis}\label{analysis}
In this section, we analyze the convergence of the proposed SCCD-ADMM algorithm. In addition, we establish the variance of the stochastic updating result and provide the upper bound of the convergence variance, on the basis of which we derive the sampling distribution for each node to choose its communication nodes and then give the corresponding error bound.
%Specifically, Theorem \ref{theorem_convergence} proves the convergence of SCCD-ADMM and the linear convergence rate in expectation with given conditions.
%Then Theorem \ref{theorem_distance} bounds the error of the algorithm, with the help of Lemma \ref{punbiased}.
%On the basis of the variance established, we derive the sampling distribution for each node to choose communication nodes in Theorem \ref{theorem_distribution} and the corresponding error bound under the distribution in Theorem \ref{theorem_variance}.}
%
%In this section, we analyze the convergence of the proposed SCCD-ADMM algorithm. In addition, we establish the variance of the stochastic updating result, on the basis of which we derive the sampling distribution for each node to choose its communication nodes.

{Before illustrating Theorem 1, {we define the feature matrix $\boldsymbol{A}_i\in \mathbb{R}^{M\times m}$ of node $i$ as a data matrix composed of the coefficients of the variables $\bm{x}\in\mathbb{R}^{M}$. If there is only one set of coefficients of $\bm{x}$, then $m=1$. Otherwise, like in a classical regression problem, $m$ equals to the number of training samples. Here each column of $\boldsymbol{A}_i$ is one feature vector.}
{Under this definition, we firstly consider the convergence of the proposed SCCD-ADMM update rule in the sense of expectation.}

\newtheorem{theorem}{Theorem}
\begin{theorem}\label{theorem_convergence}
Let $\frac{2}{\eta_k}>c\lambda_{max}$, where $\lambda_{max}$ denotes the largest eigenvalue of $(\boldsymbol{D}-\boldsymbol{W})$.
$\boldsymbol{A}_i$ is the feature matrix of the training data in node $i$.
Let $\bm{x}^*\triangleq\bm{x}_1^*=...=\bm{x}_N^*$ and $\{\bm{u}_{ij}^*,\bm{v}_{ij}^*,j\in\mathcal{N}_i\}_{i=1}^N$ be a pair of optimal and dual solutions to the optimization problem. Under the assumptions listed before, we have

\emph{a) } $\bm{x}_1^{(k)},...,\bm{x}_N^{(k)}$ converge to the optimal point $\bm{x}^*$ in expectation.

\emph{b) } If $\boldsymbol{A}_i$ has full column rank, for all node in the network, the variables converge to the optimal value linearly in expectation.
\end{theorem}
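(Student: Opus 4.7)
The plan is to run a Lyapunov argument driven by the KKT optimality conditions for the reformulation (4)--(5), combined with a careful handling of the sampling noise introduced by the importance-sampling rule (11). As a first step I would stack the local variables and rewrite the updates (7a)--(7c) in compact form, noting that the coupling term $\sum_{j\in\mathcal{N}_i}(\bm{x}_i^{(k-1)}-\bm{x}_j^{(k-1)})$ is the $i$-th block of $(\boldsymbol{D}-\boldsymbol{W})\bm{x}^{(k-1)}$; this is why $\lambda_{\max}(\boldsymbol{D}-\boldsymbol{W})$ is the natural smoothness constant of the coupling and why the hypothesis $2/\eta_k>c\lambda_{\max}$ is exactly what makes the proximal penalty in (7c) dominate the linearization of $h_{i,j}^{(k)}$. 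In parallel I would formalize the unbiasedness already hinted at below (7a): conditioned on the $\sigma$-algebra $\mathcal{F}_{k-1}$ generated by the past iterates, the sampling weights $w_{ij}^{(k)}$ defined in (11) make the estimator used in both the $\bm{p}$- and the $\bm{x}$-updates an unbiased surrogate for the full D-ADMM updates (6), so that in expectation the SCCD-ADMM recursion reduces to a linearized proximal D-ADMM.

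The central step is to write the first-order optimality condition of the proximal subproblem (7c), match it against the KKT conditions at a primal--dual optimum $(\bm{x}^\ast,\bm{p}^\ast)$, and then exploit the $\sigma$-strong convexity of each $f_i$ together with the convexity of $r$ (Assumptions~1--2) through the standard three-point identity for proximal mappings. This produces a per-node inequality that, after summation, is tailored for the candidate Lyapunov function
\begin{equation*}
V^{(k)}=\sum_{i=1}^{N}\Bigl(\tfrac{1}{\eta_k}\|\bm{x}_i^{(k)}-\bm{x}_i^\ast\|_2^2+\tfrac{1}{c}\|\bm{p}_i^{(k)}-\bm{p}_i^\ast\|_2^2\Bigr).
\end{equation*}
Taking conditional expectation and using the unbiasedness established in the previous step, I expect to arrive at a recursion of the form
\begin{equation*}
\mathbb{E}\bigl[V^{(k)}\mid\mathcal{F}_{k-1}\bigr]\le V^{(k-1)}-\alpha_k\sum_{i=1}^{N}\|\bm{x}_i^{(k-1)}-\bm{x}_i^\ast\|_2^2+\beta_k,
\end{equation*}
where $\alpha_k>0$ comes from combining $\sigma$-strong convexity with the spectral margin $2/\eta_k-c\lambda_{\max}$, and $\beta_k$ collects the sampling variance, which is exactly the quantity the later Theorem~3 bounds in closed form. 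With $\eta_k=D/\sqrt{2k}$ the sequence $\beta_k$ becomes summable while $\alpha_k$ stays bounded below; summing over $k$ and taking total expectation yields $\sum_k\mathbb{E}\|\bm{x}^{(k)}-\bm{x}^\ast\|^2<\infty$, which together with the monotonicity of $\mathbb{E}[V^{(k)}]$ delivers part (a).

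For part (b), under the full column rank hypothesis on each $\boldsymbol{A}_i$ the composite loss $f_i$ is strongly convex directly in $\bm{x}_i$ with a uniform modulus, so the KKT system becomes strongly monotone and the dual iterates $\bm{p}_i^{(k)}$ inherit a coercivity estimate. I would then replace the diminishing schedule by a constant $\eta$ satisfying $2/\eta>c\lambda_{\max}$ and upgrade the one-step inequality above to a pure contraction $\mathbb{E}[V^{(k)}]\le\rho\,\mathbb{E}[V^{(k-1)}]$ with $\rho\in(0,1)$, by absorbing the stochastic variance term into the contraction factor using the explicit importance-sampling distribution (11), which is tailored to minimize precisely that variance. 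The main obstacle I anticipate is not either leg of the argument in isolation but the interaction between them: keeping $\rho<1$ while still allowing $\eta$ as large as the spectral bound permits requires a sharp variance estimate for the linearized term $\nabla h_{i,j}^{(k)}(\bm{x}_i^{(k-1)})\times\bm{x}_i$, and turning that estimate into a clean geometric rate is where the proof will be most delicate.
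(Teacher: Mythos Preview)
Your overall skeleton (write the optimality condition of the proximal subproblem, subtract the KKT conditions at $(\bm{x}^\ast,\bm{u}^\ast,\bm{v}^\ast)$, multiply by $\bm{x}_i^{(k)}-\bm{x}_i^\ast$, exploit strong convexity of $f_i$ and convexity of $r$, then run a three-point identity on a primal--dual Lyapunov function) matches the paper. The paper's Lyapunov function is slightly different from yours: it uses the edge multipliers $\bm{u}$ rather than $\bm{p}$, and a weighted primal norm $\|\cdot\|_{\boldsymbol{H}}$ with $\boldsymbol{H}=\boldsymbol{D}_\eta+c((\boldsymbol{W}-\boldsymbol{D})\otimes\boldsymbol{I}_M)$; the hypothesis $2/\eta_k>c\lambda_{\max}$ is used precisely to make $\boldsymbol{H}\succ0$, after which the argument is pushed through exactly as in the IC-ADMM analysis of Chang--Hong--Wang.

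Where you diverge is in the handling of the stochastic term, and this is also where your plan becomes shakier than necessary. You carry a residual variance $\beta_k$ in the one-step recursion and then rely on the diminishing schedule to make it summable. The paper does not do this: once you take (conditional) expectation, the unbiasedness you already identified makes the sampled quantities in the $\bm{p}$-update and in $\nabla h^{(k)}_{i,\mathcal{N}_{\text{c},i}}$ equal \emph{exactly} to their full-neighborhood counterparts, so the expected inequality is purely deterministic---there is no $\beta_k$ at all. In other words, the proof of part~(a) is literally the deterministic IC-ADMM proof applied to the expected dynamics; the sampling distribution in (11) plays no role here beyond unbiasedness.

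This matters most for part~(b). With a nonzero variance term you cannot hope for a clean contraction $\mathbb{E}[V^{(k)}]\le\rho\,\mathbb{E}[V^{(k-1)}]$: you would get $\mathbb{E}[V^{(k)}]\le\rho\,\mathbb{E}[V^{(k-1)}]+\beta_k$, and ``absorbing the stochastic variance term into the contraction factor'' via importance sampling does not work---importance sampling minimizes the variance but does not annihilate it. The paper sidesteps this entirely because, after expectation, there is no variance term; the linear rate then follows from the same contraction argument as in the deterministic case (full column rank of $\boldsymbol{A}_i$ makes $\boldsymbol{M}=\tilde{\boldsymbol{A}}^T\boldsymbol{D}_{\sigma_f}\tilde{\boldsymbol{A}}\succ0$, which yields the existence of $\delta>0$ with $\mathbb{E}[V^{(k)}]\le(1+\delta)^{-1}\mathbb{E}[V^{(k-1)}]$). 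There is also no need to freeze $\eta_k$ to a constant. So your proposal is not wrong in spirit, but the route you sketch for (b) would not close as stated; the fix is simply to recognize that the unbiasedness already eliminates the noise in expectation, after which both (a) and (b) reduce to the deterministic analysis.
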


%The proof can be found in the Appendix.
\begin{proof}
According to the KKT conditions, for $\forall i\in\mathcal{V}$, we have:
\begin{subequations}
\begin{equation}\label{13a}
\nabla f_i(\bm{x}_i^*)+\lambda\partial r(\bm{x}_i^*)+\sum_{j\in\mathcal{N}_i}(\bm{u}_{ij}^*+\bm{v}_{ji}^*)=0,
\end{equation}
\begin{equation}\label{13b}
\bm{x}_i^*=\bm{x}_j^*, \forall j\in\mathcal{N}_i,
\end{equation}
\begin{equation}\label{13c}
\bm{u}_{ij}^*+\bm{v}_{ij}^*=0, \forall j\in\mathcal{N}_i,
\end{equation}
\end{subequations}
where $\partial r(\bm{x}_i^*)$ denotes the subgradient of $r$ at $\bm{x}_i^*$. When the algorithm converges and consensus among agents is achieved, we have $\bm{x}^*\triangleq\bm{x}_i^*,\forall i\in\mathcal{V}$ and $\tilde{\bm{x}}^*=\bm{1}_N\otimes\bm{x}^*$.

We first consider the condition of $Num_i=1$ and suppose we choose node $l$ as the communication node. According to our definition that $\bm{p}_{i}^k\triangleq \sum_{j\in{\mathcal{N}_i}}(\bm{u}_{ij}^{(k)}+\bm{v}_{ji}^{(k)}),  i\in\mathcal{V}$ and considering the optimal condition of (\ref{xupdate}), we have that
\begin{equation}\label{14}
\begin{aligned}
&\nabla f_i(\bm{x}_i^{(k)})+\lambda\partial r(\bm{x}_i^{(k)})+\frac{2}{\eta_k}(\bm{x}_i^{(k)}-\bm{x}_i^{(k-1)})\\
&+\sum_{j\in\mathcal{N}_i}(\bm{u}_{ij}^{(k)}+\bm{v}_{ji}^{(k)})+\frac{c}{w_{il}^{(k)}}(\bm{x}_i^{(k-1)}-\bm{x}_l^{(k-1)})=0.
\end{aligned}
\end{equation}
(\ref{14}) minus (\ref{13a}) and we get
\begin{equation}\label{15}
\begin{aligned}
&\nabla f_i(\bm{x}_i^{(k)})-\nabla f_i(\bm{x}_i^*)+\frac{2}{\eta_k}(\bm{x}_i^{(k)}-\bm{x}_i^{(k-1)})\\
&+\lambda\partial r(\bm{x}_i^{(k)})-\lambda\partial r(\bm{x}_i^*)+\sum_{j\in\mathcal{N}_i}(\bm{u}_{ij}^{(k)}+\bm{v}_{ji}^{(k)}-\bm{u}_{ij}^*-\bm{v}_{ji}^*)\\
&+\frac{c}{w_{il}^{(k)}}(\bm{x}_i^{(k-1)}-\bm{x}_l^{(k-1)})=0.
\end{aligned}
\end{equation}
We multiply $(\bm{x}_i^{(k)}-\bm{x}_i^{*})$ on both sides and reformulate (\ref{15}) as
\begin{equation}\label{16}
\begin{aligned}
&(\nabla f_i(\bm{x}_i^{(k)})-\nabla f_i(\bm{x}_i^{*}))^T(\bm{x}_i^{(k)}-\bm{x}_i^{*})\\
&+\frac{2}{\eta_k}(\bm{x}_i^{(k)}-\bm{x}_i^{(k-1)})^T(\bm{x}_i^{(k)}-\bm{x}_i^{*})\\
&+\lambda(\partial r(\bm{x}_i^{(k)})-\partial r(\bm{x}_i^*))^T(\bm{x}_i^{(k)}-\bm{x}_i^{*})\\
&+\sum_{j\in\mathcal{N}_i}(\bm{u}_{ij}^{(k)}+\bm{v}_{ji}^{(k)}-\bm{u}_{ij}^*-\bm{v}_{ji}^*)^T(\bm{x}_i^{(k)}-\bm{x}_i^{*})\\
&+\frac{2c}{w_{il}^{(k)}}(\bm{x}_i^{(k)}-\frac{\bm{x}_i^{(k-1)}+\bm{x}_l^{(k-1)}}{2})^T(\bm{x}_i^{(k)}-\bm{x}_i^{*})\\
&+\frac{2c}{w_{il}^{(k)}}(\bm{x}_i^{(k-1)}-\bm{x}_i^{(k)})^T(\bm{x}_i^{(k)}-\bm{x}_i^*)=0.
\end{aligned}
\end{equation}
We take expectation on both sides. Considering the update rule of $\bm{p}_i$, the forth term is equal to
\begin{equation}\label{17}
\begin{aligned}
&\mathbb{E}[\sum_{j\in\mathcal{N}_i}(\bm{u}_{ij}^{(k)}+\bm{v}_{ji}^{(k)}-\bm{u}_{ij}^*-\bm{v}_{ji}^*)^T(\bm{x}_i^{(k)}-\bm{x}_i^{*})]\\
&=\mathbb{E}[\sum_{j\in\mathcal{N}_i}(\bm{u}_{ij}^{(k+1)}+\bm{v}_{ji}^{(k+1)}-\bm{u}_{ij}^*-\bm{v}_{ji}^*)^T(\bm{x}_i^{(k)}-\bm{x}_i^{*})]\\
&-\mathbb{E}[\frac{2c}{w_{il}^{(k+1)}}(\bm{x}_i^{(k)}-\frac{\bm{x}_i^{(k)}+\bm{x}_l^{(k)}}{2})^T(\bm{x}_i^{(k)}-\bm{x}_i^{*})]\\
&=\mathbb{E}[\sum_{j\in\mathcal{N}_i}(\bm{u}_{ij}^{(k+1)}+\bm{v}_{ji}^{(k+1)}-\bm{u}_{ij}^*-\bm{v}_{ji}^*)^T(\bm{x}_i^{(k)}-\bm{x}_i^{*})]\\
&-2c\sum_{j\in\mathcal{N}_i}(\bm{x}_i^{(k)}-\frac{\bm{x}_i^{(k)}+\bm{x}_j^{(k)}}{2})^T(\bm{x}_i^{(k)}-\bm{x}_i^{*}).
\end{aligned}
\end{equation}
The expectation of the fifth term of (\ref{16}) is
\begin{equation}\label{18}
\begin{aligned}
&\mathbb{E}[\frac{2c}{w_{il}^{(k)}}(\bm{x}_i^{(k)}-\frac{\bm{x}_i^{(k-1)}+\bm{x}_l^{(k-1)}}{2})^T(\bm{x}_i^{(k)}-\bm{x}_i^{*})]\\
&=2c\sum_{j\in\mathcal{N}_i}(\bm{x}_i^{(k)}-\frac{\bm{x}_i^{(k-1)}+\bm{x}_j^{(k-1)}}{2})^T(\bm{x}_i^{(k)}-\bm{x}_i^{*}),
\end{aligned}
\end{equation}
and the expectation of the sixth term is
\begin{equation}\label{19}
\begin{aligned}
&\mathbb{E}[\frac{2c}{w_{il}^{(k)}}(\bm{x}_i^{(k-1)}-\bm{x}_i^{(k)})^T(\bm{x}_i^{(k)}-\bm{x}_i^*)]\\
&=2c\sum_{j\in\mathcal{N}_i}(\bm{x}_i^{(k-1)}-\bm{x}_i^{(k)})^T(\bm{x}_i^{(k)}-\bm{x}_i^*)\\
&=2c|\mathcal{N}_i|(\bm{x}_i^{(k-1)}-\bm{x}_i^{(k)})^T(\bm{x}_i^{(k)}-\bm{x}_i^*).
\end{aligned}
\end{equation}
The condition of $Num_i>1$ is similar because the expectation of the forth, fifth term and sixth term is equal to (\ref{17}-\ref{19}).

The other terms remain the same. So after taking expectations, we can get the following
\begin{equation}\label{20}
\begin{aligned}
\mathbb{E}[&(\nabla f_i(\bm{x}_i^{(k)})-\nabla f_i(\bm{x}_i^{*}))^T(\bm{x}_i^{(k)}-\bm{x}_i^{*})\\
&+\lambda(\partial r(\bm{x}_i^{(k)})-\partial r(\bm{x}_i^*))^T(\bm{x}_i^{(k)}-\bm{x}_i^{*})\\
&+(\frac{2}{\eta_k}-2c|\mathcal{N}_i|)(\bm{x}_i^{(k)}-\bm{x}_i^{(k-1)})^T(\bm{x}_i^{(k)}-\bm{x}_i^{*})\\
&+\sum_{j\in\mathcal{N}_i}(\bm{u}_{ij}^{(k+1)}+\bm{v}_{ji}^{(k+1)}-\bm{u}_{ij}^*-\bm{v}_{ji}^*)^T(\bm{x}_i^{(k)}-\bm{x}_i^{*})\\
&2c\sum_{j\in\mathcal{N}_i}(\frac{\bm{x}_i^{(k)}+\bm{x}_j^{(k)}}{2}-\frac{\bm{x}_i^{(k-1)}+\bm{x}_j^{(k-1)}}{2})]=0,
\end{aligned}
\end{equation}
%where we omit the expectation sign $\mathbb{E}$ since we will focus on the convergence property in expectation.

By the strong convexity of $f_i$ and the convexity of $r$, the first and second term of (\ref{20}) can respectively lower bounded as
\begin{equation}\label{21}
\begin{aligned}
&(\nabla f_i(\bm{x}_i^{(k)})-\nabla f_i(\bm{x}_i^{*}))^T(\bm{x}_i^{(k)}-\bm{x}_i^{*})\\
& \ge \sigma_{f,i}^2\|\bm{x}_i^{(k)}-\bm{x}^{*}\|_{\boldsymbol{A}_i^T\boldsymbol{A}_i}^2,
\end{aligned}
\end{equation}
\begin{equation}\label{22}
\lambda(\partial r(\bm{x}_i^{(k)})-\partial r(\bm{x}_i^*))^T(\bm{x}_i^{(k)}-\bm{x}_i^{*})\ge 0,
\end{equation}
in which $\sigma_{f,i}$ is the $\sigma$ that satisfies Assumption 2 for function $f_i(\cdot)$, and $\|\bm{z}\|_{\boldsymbol{A}}^2\triangleq\bm{z}^T\boldsymbol{A}\bm{z}$.

By substituting (\ref{21}) and (\ref{22}) into (\ref{20}) and summing over $i=1,...,N$, we obtain
\begin{equation}\label{23}
\begin{aligned}
&\mathbb{E}[\|\bm{x}^{(k)}-\bm{x}^{*}\|_{\boldsymbol{M}}^2\\
&+(\bm{x}^{(k)}-\bm{x}^{(k-1)})^T(\boldsymbol{D}_{\eta}-2c\boldsymbol{D}\otimes\boldsymbol{I}_K)(\bm{x}^{(k)}-\bm{x}^{*})\\
&+\sum_{i=1}^N\sum_{j\in\mathcal{N}_i}(\bm{u}_{ij}^{(k+1)}-\bm{u}_{ij}^{*})^T(\bm{x}_i^{(k)}-\bm{x}^{*})\\
&+\sum_{i=1}^N\sum_{j\in\mathcal{N}_i}(\bm{v}_{ji}^{(k+1)}-\bm{v}_{ji}^{*})^T(\bm{x}_i^{(k)}-\bm{x}^{*})\\
&+2c\sum_{i=1}^N\sum_{j\in\mathcal{N}_i}(\frac{\bm{x}_i^{(k)}+\bm{x}_j^{(k)}}{2}-\frac{\bm{x}_i^{(k-1)}+\bm{x}_j^{(k-1)}}{2})^T(\bm{x}_i^{(k)}-\bm{x}_i^{*})]\\
&\le 0,
\end{aligned}
\end{equation}
where $\bm{x}^{(k)}=[(\bm{x}_1^{(k)})^T, ...,(\bm{x}_N^{(k)})^T]^T$, $\boldsymbol{D}_{\eta}=diag\{\frac{2}{\eta_{1k}}, ..., \frac{2}{\eta_{Nk}}\}\otimes\boldsymbol{I}_K$, $\boldsymbol{D}_{\sigma_f}=diag\{\sigma_1, ..., \sigma_N\}\otimes\boldsymbol{I}_K$, $\boldsymbol{M}=\tilde{\boldsymbol{A}}^T\boldsymbol{D}_{\sigma_f}\tilde{\boldsymbol{A}}$ and $\tilde{\boldsymbol{A}}=blkdiag\{\boldsymbol{A}_1, ...,\boldsymbol{A}_N\}$.
It can be observed from (\ref{13c}) and also the update of $\bm{p}$ that
\begin{equation}\label{24}
\bm{u}_{ij}^{*}+\bm{v}_{ij}^{*}=\bm{0},\quad \forall j, i,
\end{equation}
\begin{equation}\label{25}
\bm{u}_{ij}^{(k)}+\bm{v}_{ij}^{(k)}=\bm{0},\quad \forall j, i, k,
\end{equation}
given the initial $\bm{u}_{ij}^{(0)}+\bm{v}_{ij}^{(0)}=\bm{0}, \forall j, i$.
Based on the above properties, the third and the fourth term of (\ref{23}) can be written as
\begin{equation}\label{26}
\begin{aligned}
\mathbb{E}&[\sum_{i=1}^N\sum_{j\in\mathcal{N}_i}(\bm{u}_{ij}^{(k+1)}-\bm{u}_{ij}^{*})^T(\bm{x}_i^{(k)}-\bm{x}^{*})\\
&+\sum_{i=1}^N\sum_{j\in\mathcal{N}_i}(\bm{v}_{ji}^{(k+1)}-\bm{v}_{ji}^{*})^T(\bm{x}_i^{(k)}-\bm{x}^{*})]\\
&=\mathbb{E}[\sum_{i=1}^N\sum_{j\in\mathcal{N}_i}(\bm{u}_{ij}^{(k+1)}-\bm{u}_{ij}^{*})^T(\bm{x}_i^{(k)}-\bm{x}^{*})\\
&+\sum_{i=1}^N\sum_{j\in\mathcal{N}_i}(\bm{v}_{ij}^{(k+1)}-\bm{v}_{ij}^{*})^T(\bm{x}_j^{(k)}-\bm{x}^{*})]\\
&=\mathbb{E}[\sum_{i=1}^N\sum_{j\in\mathcal{N}_i}(\bm{u}_{ij}^{(k+1)}-\bm{u}_{ij}^{*})^T(\bm{x}_i^{(k)}-\bm{x}_j^{(k)})]\\
%%%%%%%%%%%%%%%%%%%%????????
&=\mathbb{E}[\frac{2}{c}\sum_{i=1}^N\sum_{j\in\mathcal{N}_i}(\bm{u}_{ij}^{(k+1)}-\bm{u}_{ij}^{*})^T(\bm{u}_{ij}^{(k+1)}-\bm{u}_{ij}^{(k)})]\\
&\triangleq \mathbb{E}[\frac{2}{c}(\bm{u}^{(k+1)}-\bm{u}^{*})^T(\bm{u}^{(k+1)}-\bm{u}^{(k)})],
\end{aligned}
\end{equation}
where the third equality is owing to the update rule of $\bm{u}_{ij}$ and $\bm{u}^{(k)}$ is a vector that stacks $\bm{u}_{ij}^{(k)}$ for all $j\in\mathcal{N}_i, i=1,...,N$.
The fifth term of {({\ref{23}})} can be rearranged as $c(\bm{x}^{(k)}-\bm{x}^{(k-1)})^T[(\boldsymbol{D}+\boldsymbol{W})\otimes \boldsymbol{I}_K](\bm{x}^{(k)}-\tilde{\bm{x}}^*)$. The detail can be found in (A.16) in \cite{Inexact}. By substituting this and (\ref{26}) into (\ref{23}), we obtain
\begin{equation}\label{27}
\begin{aligned}
{\mathbb{E}}&\Big[\|\bm{x}^{(k)}-\tilde{\bm{x}}^*\|_{\boldsymbol{M}}^2+(\bm{x}^{(k)}-\bm{x}^{(k-1)})^T\boldsymbol{H}(\bm{x}^{(k)}-\tilde{\bm{x}}^*)\\
&+\frac{2}{c}(\bm{u}^{(k+1)}-\bm{u}^{*})^T(\bm{u}^{(k+1)}-\bm{u}^{(k)})\Big]\le 0,
\end{aligned}
\end{equation}
where we define $\boldsymbol{H}\triangleq\boldsymbol{D}_{\eta}+c((\boldsymbol{W}-\boldsymbol{D})\otimes\boldsymbol{I}_M)$.
%and we omit the expectation sign $\mathbb{E}$ since we will focus on the convergence property in expectation.

Under the assumption that $\frac{2}{\eta_k}>c\lambda_{max}$ where $\lambda_{max}$ is the largest eigenvalue of $(\boldsymbol{D}-\boldsymbol{W})$, we can get that $\boldsymbol{H}\succ0$. Note that
\begin{equation}\label{28}
\begin{aligned}
(\bm{b}^{(k)}-&\bm{b}^{(k-1)})^T\boldsymbol{G}(\bm{b}^{(k)}-\bm{b}^{*})=\frac{1}{2}\|\bm{b}^{(k)}-\bm{b}^{*}\|_{\boldsymbol{G}}^2\\
&+\frac{1}{2}\|\bm{b}^{(k)}-\bm{b}^{(k-1)}\|_{\boldsymbol{G}}^2-\frac{1}{2}\|\bm{b}^{(k-1)}-\bm{b}^{*}\|_{\boldsymbol{G}}^2
\end{aligned}
\end{equation}
for any sequence $\bm{b}^{(k)}$ and matrix $\boldsymbol{G}\succ0$. By applying (\ref{28}) to the second and third {terms} in {(\ref{27})}, we obtain that
\begin{equation}\label{29}
\begin{aligned}
&{\mathbb{E}}\Big[(\bm{x}^{(k)}-\tilde{\bm{x}}^*)^T[\boldsymbol{M}+\frac{1}{2}\boldsymbol{H}](\bm{x}^{(k)}-\tilde{\bm{x}}^*)+\frac{1}{c}\|\bm{u}^{(k+1)}-\bm{u}^{*}\|_2^2\Big]\\
&\le {\mathbb{E}}\Big[\frac{1}{2}(\bm{x}^{(k-1)}-\tilde{\bm{x}}^*)^T\boldsymbol{H}(\bm{x}^{(k-1)}-\tilde{\bm{x}}^*)\\
&+\frac{1}{c}\|\bm{u}^{(k)}-\bm{u}^{*}\|_2^2-\frac{1}{c}\|\bm{u}^{(k+1)}-\bm{u}^{(k)}\|_2^2\\
&-\frac{1}{2}(\bm{x}^{(k)}-\bm{x}^{(k-1)})^T\boldsymbol{H}(\bm{x}^{(k)}-\bm{x}^{(k-1)})\Big].
\end{aligned}
\end{equation}
Here $\boldsymbol{M}=\tilde{\boldsymbol{A}}^T\boldsymbol{D}_{\sigma_f}\tilde{\boldsymbol{A}}$, where $\boldsymbol{D}_{\sigma_f}$ is a diagonal matrix and its diagonal elements $\sigma_f>0$. So we have $\boldsymbol{M}\succ0$.
Then same as proved in \cite{Inexact}, we can conclude that $\mathbb{E}[\|\bm{x}^{(k)}-\tilde{\bm{x}^*}\|_{\boldsymbol{H}}^2+\frac{1}{c}\|\bm{u}^{(k+1)}-\bm{u}^{*}\|_2^2]$ converges to $0$ and thus the proof is complete.
The proof of the convergence rate is almost the same as the proof of Theorem \ref{theorem_convergence}(b) in \cite{Inexact}, except that the expressions of $\boldsymbol{M}$ and $\boldsymbol{H}$ are different. To prove the linear convergence rate in expectation, we need to prove that for some $\delta>0$,
\begin{equation}
\begin{aligned}
&{\mathbb{E}}\Big[(\|\bm{x}^{(k)}-\tilde{\bm{x}}^*\|^2_{\alpha \boldsymbol{M}+\frac{1}{2}\boldsymbol{H}}+\frac{1}{c}\|\bm{u}^{(k+1)}-\bm{u}^*\|_2^2\Big]\\
&\le {\mathbb{E}}\Big[\frac{1}{1+\delta}(\|\bm{x}^{(k-1)}-\tilde{\bm{x}}^*\|^2_{\alpha \boldsymbol{M}+\frac{1}{2}\boldsymbol{H}}+\frac{1}{c}\|\bm{u}^{(k)}-\bm{u}^*\|_2^2)\Big].
\end{aligned}
\end{equation}
In both algorithms, $\boldsymbol{M}\succ0$ and $\boldsymbol{H}\succ0$. Thus, the conditions (A.36) in \cite{Inexact} can be satisfied for some $\delta>0$, and the convergence rate in expectation is linear.
\end{proof}

Theorem \ref{theorem_convergence} gives the convergence analysis on expectation.
{Specifically, Theorem \ref{theorem_convergence}a) shows that the variables in all nodes $\bm{x}_1^{(k)},...,\bm{x}_N^{(k)}$ converge to the same optimal point $\bm{x}^*$ in expectation, which indicates the convergence in both accuracy and consensus error. Then, Theorem \ref{theorem_convergence}b) shows the linear convergence rate of SCCD-ADMM in expectation, given the feature matrix with full column rank.}

Because of the stochastic sampling process, the variance is brought into the result.
{In addition to the convergence property in expectation, we will show that the convergence of SCCD-ADMM is within the variance bound in the following Theorem \ref{theorem_distance}.}
In order to analyze the variance of the stochastic algorithm SCCD-ADMM, we consider the corresponding deterministic algorithm named as DSCCD-ADMM. It updates the primal and dual variables using the same rules as SCCD-ADMM except that each node communicates with all of the neighboring nodes, without the sampling technique.
{By removing the expectation operator of (\ref{20}), the equality can still hold for DSCCD-ADMM by definition. Since the analysis after (\ref{20}) is all based on the expectation, the convergence of DSCCD-ADMM can be easily proved.
Under the definition of DSCCD-ADMM and its convergence, we give Lemma \ref{punbiased} as follows to help prove Theorem \ref{theorem_distance}, which shows that the convergence property of SCCD-ADMM is within the variance bound from DSCCD-ADMM.}

\newtheorem{lemma}{Lemma}
\begin{lemma}\label{punbiased}
Let $\hat{\bm{p}}_i^{(k)}$ and $\hat{\bm{x}}_i^{(k)}$ denote the $\bm{p}$-update and $\bm{x}$-update results of DSCCD-ADMM in the $k$-th iteration, respectively, with the $(k-1)$-th iteration values $\bm{p}_i^{(k-1)}$ and $\bm{x}_i^{(k-1)}$ ``synchronized" with those of SCCD-ADMM. Then we have $\mathbb{E}[\bm{p}_i^{(k)}]=\mathbb{E}[\hat{\bm{p}}_i^{(k)}]$ and $\hat{\bm{p}}_i^{(k)}-\bm{p}_i^{(k)}=\nabla h_{i,\mathcal{N}_{i}}^{(k)}(\bm{x}_i^{(k-1)})-\nabla h_{i,\mathcal{N}_{\text{c},i}}^{(k)}(\bm{x}_i^{(k-1)})$.
\end{lemma}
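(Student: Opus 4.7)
The plan is to establish both identities by direct computation, exploiting the structural fact that the $\bm{p}$-increment in SCCD-ADMM is exactly $\nabla h_{i,\mathcal{N}_{\text{c},i}}^{(k)}$ evaluated at $\bm{x}_i^{(k-1)}$, and then invoking the importance-sampling unbiasedness display already shown immediately after equation (\ref{pupdate}).

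First I would differentiate (\ref{hupdate}) with respect to $\bm{x}_i$ and evaluate at $\bm{x}_i=\bm{x}_i^{(k-1)}$. The factor $2$ from the quadratic cancels the $\tfrac{1}{2}$ in the midpoint $\tfrac{1}{2}(\bm{x}_i^{(k-1)}+\bm{x}_j^{(k-1)})$, yielding
\[
\nabla h_{i,\mathcal{N}_{\text{c},i}}^{(k)}(\bm{x}_i^{(k-1)}) \;=\; \frac{c}{Num_i}\sum_{j\in\mathcal{N}_{\text{c},i}}\frac{\bm{x}_i^{(k-1)}-\bm{x}_j^{(k-1)}}{w_{ij}^{(k)}},
\]
which is exactly the increment appearing in (\ref{pupdate}). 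Hence $\bm{p}_i^{(k)}=\bm{p}_i^{(k-1)}+\nabla h_{i,\mathcal{N}_{\text{c},i}}^{(k)}(\bm{x}_i^{(k-1)})$. Applying the same derivation to DSCCD-ADMM, whose update is obtained from SCCD-ADMM by replacing $\mathcal{N}_{\text{c},i}$ with the full neighborhood $\mathcal{N}_i$ and dropping the sampling correction $1/(Num_i\, w_{ij}^{(k)})$, gives $\hat{\bm{p}}_i^{(k)}=\bm{p}_i^{(k-1)}+\nabla h_{i,\mathcal{N}_i}^{(k)}(\bm{x}_i^{(k-1)})$ where $\nabla h_{i,\mathcal{N}_i}^{(k)}(\bm{x}_i^{(k-1)})=c\sum_{j\in\mathcal{N}_i}(\bm{x}_i^{(k-1)}-\bm{x}_j^{(k-1)})$. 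Subtracting the two identities, the synchronized $\bm{p}_i^{(k-1)}$ cancels and the second claim of the lemma follows immediately.

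For the unbiasedness claim, I would use the display placed just below (\ref{pupdate}), which asserts that for a single draw $j\in\mathcal{N}_i$ under distribution $w_{ij}^{(k)}$, $\mathbb{E}[(\bm{x}_i^{(k-1)}-\bm{x}_j^{(k-1)})/w_{ij}^{(k)}]=\sum_{l\in\mathcal{N}_i}(\bm{x}_i^{(k-1)}-\bm{x}_l^{(k-1)})$. Averaging $Num_i$ such draws preserves the expectation, so $\mathbb{E}[\nabla h_{i,\mathcal{N}_{\text{c},i}}^{(k)}(\bm{x}_i^{(k-1)})]=\nabla h_{i,\mathcal{N}_i}^{(k)}(\bm{x}_i^{(k-1)})$. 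Because $\bm{p}_i^{(k-1)}$ and $\{\bm{x}_j^{(k-1)}\}_{j\in\mathcal{N}_i}$ are synchronized between the two algorithms and held fixed when taking this iteration-$k$ expectation, linearity gives $\mathbb{E}[\bm{p}_i^{(k)}]=\bm{p}_i^{(k-1)}+\nabla h_{i,\mathcal{N}_i}^{(k)}(\bm{x}_i^{(k-1)})=\hat{\bm{p}}_i^{(k)}$; and since $\hat{\bm{p}}_i^{(k)}$ is deterministic given the synchronized inputs, $\mathbb{E}[\hat{\bm{p}}_i^{(k)}]=\hat{\bm{p}}_i^{(k)}$, closing the identity.

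The main obstacle is essentially notational rather than mathematical. One has to make precise what ``the same rules as SCCD-ADMM except each node communicates with all neighbors, without the sampling technique'' means at the level of formulas, namely that the combined correction factor $1/(Num_i\, w_{ij}^{(k)})$ is simply replaced by $1$ when the sum ranges over the entire neighborhood $\mathcal{N}_i$; once this identification is in place, both claims reduce to one-line computations, and the ``synchronization'' hypothesis on $\bm{p}_i^{(k-1)}$ is what allows the two algorithms' $\bm{p}$-updates to be compared term by term.
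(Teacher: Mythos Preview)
Your proposal is correct and follows essentially the same approach as the paper: both arguments reduce to direct computation, using the unbiasedness of the importance-sampling estimator for the first claim and a straightforward subtraction of the two $\bm{p}$-updates for the second. Your only difference is organizational---you first identify the $\bm{p}$-increment with $\nabla h_{i,\mathcal{N}_{\text{c},i}}^{(k)}(\bm{x}_i^{(k-1)})$ and then derive both claims from that observation, whereas the paper computes each claim separately---but this is a matter of presentation, not substance.
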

%\emph{Proof:}
\begin{proof}
According to the update rule of $\bm{p}_{i}^{(k)}$, it can be derived that
\begin{equation}\nonumber
\begin{aligned}
\mathbb{E}[\bm{p}_i^{(k)}] &=\mathbb{E}[\bm{p}_i^{(k-1)}] + \mathbb{E}\Big [\frac{1}{Num_i}\sum_{j\in\mathcal{N}_{\text{c},i}}\frac{\bm{x}_i^{(k-1)}-\bm{x}_j^{(k-1)}}{w_{ij}^{(k)}}\Big]\\
					    &=\bm{p}_i^{(k-1)}+\frac{1}{Num_i}\sum_{j\in\mathcal{N}_{\text{c},i}}\mathbb{E}\Big[\frac{\bm{x}_i^{(k-1)}-\bm{x}_j^{(k-1)}}{w_{ij}^{(k)}}\Big]\\
					    &=\bm{p}_i^{(k-1)}+\frac{1}{Num_i}\sum_{j\in\mathcal{N}_{\text{c},i}}\sum_{j'\in\mathcal{N}_{i}}w_{ij'}^{(k)}\Big[\frac{\bm{x}_i^{(k-1)}-\bm{x}_{j'}^{(k-1)}}{w_{ij'}^{(k)}}\Big]\\
					    &=\bm{p}_i^{(k-1)}+\frac{1}{Num_i}\times Num_i\sum_{j'\in\mathcal{N}_{i}}\Big[\bm{x}_i^{(k-1)}-\bm{x}_{j'}^{(k-1)}\Big]\\
					    &=\bm{p}_i^{(k-1)}+\sum_{j\in\mathcal{N}_{i}}(\bm{x}_i^{(k-1)}-\bm{x}_j^{(k-1)})=\hat{\bm{p}}_i^{(k)}.
\end{aligned}
\end{equation}
Besides, we have
\begin{equation}\nonumber
\begin{aligned}
& \hat{\bm{p}}_i^{(k)}-\bm{p}_i^{(k)}\\
&=c\sum_{j\in\mathcal{N}_{i}}(\bm{x}_i^{(k-1)}-\bm{x}_j^{(k-1)})-\frac{c}{Num_i}\sum_{j\in{\mathcal{N}_{\text{c},i}}}\frac{(\bm{x}_i^{(k-1)}-\bm{x}_j^{(k-1)})}{w_{ij}^{(k)}}\\
&=\nabla h_{i,\mathcal{N}_{i}}^{(k)}(\bm{x}_i^{(k-1)})-\nabla h_{i,\mathcal{N}_{\text{c},i}}^{(k)}(\bm{x}_i^{(k-1)}).
\end{aligned}
\end{equation}
\end{proof}

\begin{theorem}\label{theorem_distance}
Let $\hat{\bm{x}}_i^{(k)}$ defined as in \textbf{Lemma \ref{punbiased}}. The distance between $\bm{x}_i^{(k)}$ and $\hat{\bm{x}}_i^{(k)}$, can be bounded as
$\mathbb{E}\Big[ \|\hat{\bm{x}}_i^{(k)}-\bm{x}_i^{(k)}\|_2^2 \Big ] \le
\dfrac{\eta_k^2}{2}\mathbb{V}\Big [\nabla h_{i,\mathcal{N}_{\text{c},i}}^{(k)}(\bm{x}_i^{(k-1)})\Big ]$, where the variance $\mathbb{V}\Big [\nabla h_{i,\mathcal{N}_{\text{c},i}}^{(k)}(\bm{x}_i^{(k-1)})\Big ] \triangleq \mathbb{E}\bigg[\Big\|\nabla h_{i,\mathcal{N}_{i}}^{(k)}(\bm{x}_i^{(k-1)})-\nabla h_{i,\mathcal{N}_{\text{c},i}}^{(k)}(\bm{x}_i^{(k-1)})\Big\|_2^2\bigg]$.
\end{theorem}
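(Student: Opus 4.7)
The plan is to compare the first-order optimality conditions that define $\bm{x}_i^{(k)}$ and $\hat{\bm{x}}_i^{(k)}$ in (\ref{xupdate}). Both are minimizers of strongly-convex subproblems whose objectives differ only in the dual-like linear term and in the $\nabla h$ term, so subtracting the two KKT relations and applying monotonicity should isolate $\|\hat{\bm{x}}_i^{(k)}-\bm{x}_i^{(k)}\|_2^2$ on the left and a gradient difference on the right. Lemma \ref{punbiased} will be the glue that merges the $\bm{p}$-difference into the $\nabla h$-difference.

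Concretely, I would first write the (sub)gradient optimality conditions of (\ref{xupdate}) for the stochastic iterate,
\begin{equation*}
\bm{0}\in\nabla f_i(\bm{x}_i^{(k)})+\lambda\partial r(\bm{x}_i^{(k)})+\bm{p}_i^{(k)}+\nabla h_{i,\mathcal{N}_{\text{c},i}}^{(k)}(\bm{x}_i^{(k-1)})+\tfrac{2}{\eta_k}(\bm{x}_i^{(k)}-\bm{x}_i^{(k-1)}),
\end{equation*}
and the analogous one for the deterministic DSCCD-ADMM iterate $\hat{\bm{x}}_i^{(k)}$ (same ``synchronized'' $\bm{x}_i^{(k-1)}$, but with $\hat{\bm{p}}_i^{(k)}$ and $\nabla h_{i,\mathcal{N}_i}^{(k)}$). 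Subtracting the two conditions and substituting $\hat{\bm{p}}_i^{(k)}-\bm{p}_i^{(k)}=\nabla h_{i,\mathcal{N}_i}^{(k)}(\bm{x}_i^{(k-1)})-\nabla h_{i,\mathcal{N}_{\text{c},i}}^{(k)}(\bm{x}_i^{(k-1)})$ from Lemma \ref{punbiased} combines the two occurrences of the $\nabla h$-difference into a single factor of $2$. Writing $\bm{d}\triangleq\hat{\bm{x}}_i^{(k)}-\bm{x}_i^{(k)}$ and $\bm{g}\triangleq\nabla h_{i,\mathcal{N}_i}^{(k)}(\bm{x}_i^{(k-1)})-\nabla h_{i,\mathcal{N}_{\text{c},i}}^{(k)}(\bm{x}_i^{(k-1)})$, the subtraction yields a subgradient identity containing $\Delta f$, $\lambda\Delta r$, $2\bm{g}$, and $\tfrac{2}{\eta_k}\bm{d}$.

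The second step is to take the inner product of this identity with $\bm{d}$. By strong convexity of $f_i$ we have $(\nabla f_i(\hat{\bm{x}}_i^{(k)})-\nabla f_i(\bm{x}_i^{(k)}))^T\bm{d}\ge 0$, and by convexity of $r$ the subgradient cross-term is also non-negative. Dropping these non-negative contributions gives
\begin{equation*}
\tfrac{2}{\eta_k}\|\bm{d}\|_2^2\;\le\;-2\bm{g}^T\bm{d}\;\le\;2\|\bm{g}\|_2\|\bm{d}\|_2.
\end{equation*}
A direct Cauchy–Schwarz then produces $\|\bm{d}\|_2\le \eta_k\|\bm{g}\|_2$, and a suitably weighted Young's inequality (absorbing half of the $\frac{2}{\eta_k}\|\bm{d}\|_2^2$ mass on the left) upgrades this to $\|\bm{d}\|_2^2\le\tfrac{\eta_k^2}{2}\|\bm{g}\|_2^2$.

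Finally, I would take expectations. Since Lemma \ref{punbiased} shows $\nabla h_{i,\mathcal{N}_{\text{c},i}}^{(k)}(\bm{x}_i^{(k-1)})$ is an unbiased estimator of $\nabla h_{i,\mathcal{N}_i}^{(k)}(\bm{x}_i^{(k-1)})$ (conditioned on the common iterate $\bm{x}_i^{(k-1)}$), the right-hand side coincides exactly with the variance $\mathbb{V}[\nabla h_{i,\mathcal{N}_{\text{c},i}}^{(k)}(\bm{x}_i^{(k-1)})]$ defined in the statement. The main obstacle is the second step: cleanly handling the non-smooth subgradient $\partial r$ (so that the monotonicity argument still lets us discard that term even though $r$ can be $\ell_1$), and choosing the Young-inequality weights precisely enough to recover the exact factor $\tfrac{1}{2}$ rather than a loose $\eta_k^2$ constant.
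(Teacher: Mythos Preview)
Your plan is essentially the same as the paper's proof: write the optimality conditions for $\bm{x}_i^{(k)}$ and $\hat{\bm{x}}_i^{(k)}$, subtract, take the inner product with $\bm{d}=\hat{\bm{x}}_i^{(k)}-\bm{x}_i^{(k)}$, discard the $f_i$ and $r$ cross-terms via (sub)gradient monotonicity, and use Lemma~\ref{punbiased} to collapse $\hat{\bm{p}}_i^{(k)}-\bm{p}_i^{(k)}$ into the $\nabla h$-difference so that a factor $2\bm{g}$ appears. The paper then applies Cauchy--Schwarz exactly as you describe.

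The only divergence is at the very last step. You propose a weighted Young inequality to recover the constant $\tfrac{\eta_k^2}{2}$, but note that from $\tfrac{2}{\eta_k}\|\bm{d}\|_2^2\le 2\|\bm{g}\|_2\|\bm{d}\|_2$ any Young splitting of the right-hand side, optimized over the weight, only returns $\|\bm{d}\|_2^2\le \eta_k^2\|\bm{g}\|_2^2$; the extra $\tfrac12$ is not obtainable that way. The paper does not use Young here: it invokes Lemma~3 of \cite{Imsam} (a Bregman-divergence comparison for minimizers of linearly perturbed strongly convex problems, with the generator $\|\cdot\|_2^2$ being $2$-strongly convex) to pass directly from $\eta_k\mathbb{E}[\|\bm{g}\|_2\|\bm{d}\|_2]$ to $\tfrac{\eta_k^2}{2}\mathbb{E}\|\bm{g}\|_2^2$. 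So your concern in the last sentence is exactly right: to match the stated constant you should appeal to that external lemma rather than Young; otherwise your argument is identical to the paper's and yields the same bound up to a factor of $2$.
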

\begin{proof}
Considering the optimal condition of the update of $\bm{x}_i^{(k)}$ and $\hat{\bm{x}}_i^{(k)}$, we have
\begin{equation}\label{31}
\begin{aligned}
&\nabla f_i(\bm{x}_i^{(k)})+\lambda\partial r(\bm{x}_i^{(k)})+\frac{2}{\eta_k}(\bm{x}_i^{(k)}-\bm{x}_i^{(k-1)})\\
&+\bm{p}_i^{(k)}+\nabla h_{i,\mathcal{N}_{\text{c},i}}^{(k)}(\bm{x}_i^{(k-1)})=0,
\end{aligned}
\end{equation}

\begin{equation}\label{32}
\begin{aligned}
&\nabla f_i(\hat{\bm{x}}_i^{(k)})+\lambda\partial r(\hat{\bm{x}}_i^{(k)})+\frac{2}{\eta_k}(\hat{\bm{x}}_i^{(k)}-\bm{x}_i^{(k-1)})\\
&+\hat{\bm{p}}_i^{(k)}+\nabla h_{i,\mathcal{N}_{i}}^{(k)}(\bm{x}_i^{(k-1)})=0.
\end{aligned}
\end{equation}
Subtracting (\ref{31}) from (\ref{32}) and multiplying the both sides by $(\hat{\bm{x}}_i^{(k)}-\bm{x}_i^{(k)})$, we arrive at
\begin{equation}\label{33}
\begin{aligned}
&\Big (\nabla f_i(\hat{\bm{x}}_i^{(k)})-\nabla f_i(\bm{x}_i^{(k)})\Big)^T(\hat{\bm{x}}_i^{(k)}-\bm{x}_i^{(k)})+\\
&\lambda\Big (\partial r(\hat{\bm{x}}_i^{(k)})-\partial r(\bm{x}_i^{(k)})\Big )^T(\hat{\bm{x}}_i^{(k)}-\bm{x}_i^{(k)})+\\
&\frac{2}{\eta_k}(\hat{\bm{x}}_i^{(k)}-\bm{x}_i^{(k)})^T(\hat{\bm{x}}_i^{(k)}-\bm{x}_i^{(k)})+(\hat{\bm{p}}_i^{(k)}-\bm{p}_i^{(k)})^T(\hat{\bm{x}}_i^{(k)}-\bm{x}_i^{(k)})\\
&+\Big (\nabla h_{i,\mathcal{N}_{i}}^{(k)}(\bm{x}_i^{(k-1)})-\nabla h_{i,\mathcal{N}_{\text{c},i}}^{(k)}(\bm{x}_i^{(k-1)})\Big )^T(\hat{\bm{x}}_i^{(k)}-\bm{x}_i^{(k)})=0.
\end{aligned}
\end{equation}
Because of the convexity of $f_i(\bm{x})$ and $r(\bm{x})$, the first and second terms in (\ref{33}) can be respectively lower bounded as
\begin{equation}\label{34}
\Big (\nabla f_i(\hat{\bm{x}}_i^{(k)})-\nabla f_i(\bm{x}_i^{(k)})\Big)^T(\hat{\bm{x}}_i^{(k)}-\bm{x}_i^{(k)})\ge0,
\end{equation}
\begin{equation}\label{35}
\lambda\Big (\partial r(\hat{\bm{x}}_i^{(k)})-\partial r(\bm{x}_i^{(k)})\Big)^T(\hat{\bm{x}}_i^{(k)}-\bm{x}_i^{(k)})\ge0.
\end{equation}
According to (\ref{34})(\ref{35}) and combining {Lemma \ref{punbiased}}, we have
\begin{equation}
\begin{aligned}\label{variance}
&\mathbb{E}\Big[ \|\hat{\bm{x}}_i^{(k)}-\bm{x}_i^{(k)}\|_2^2\Big ]\\
&\le -\frac{\eta_k}{2}\mathbb{E}\Big [<\nabla h_{i,\mathcal{N}_{i}}^{(k)}(\bm{x}_i^{(k-1)})-\nabla h_{i,\mathcal{N}_{\text{c},i}}^{(k)}(\bm{x}_i^{(k-1)})\\
&+\hat{\bm{p}}_i^{(k)}-\bm{p}_i^{(k)}, \hat{\bm{x}}_i^{(k)}-\bm{x}_i^{(k)}>\Big ]\\
&\le \eta_k\mathbb{E}\bigg[ \Big\|\nabla h_{i,\mathcal{N}_{i}}^{(k)}(\bm{x}_i^{(k-1)})-\nabla h_{i,\mathcal{N}_{\text{c},i}}^{(k)}(\bm{x}_i^{(k-1)})\Big\|\Big\|\hat{\bm{x}}_i^{(k)}-\bm{x}_i^{(k)}\Big\|\bigg]\\
&\le \frac{\eta_k^2}{2}\Big \|\nabla h_{i,\mathcal{N}_{i}}^{(k)}(\bm{x}_i^{(k-1)})-\nabla h_{i,\mathcal{N}_{\text{c},i}}^{(k)}(\bm{x}_i^{(k-1)})\Big\|_2^2\\
&=\frac{\eta_k^2}{2}\mathbb{V}\Big[\nabla h_{i,\mathcal{N}_{\text{c},i}}^{(k)}(\bm{x}_i^{(k-1)})\Big],
\end{aligned}
\end{equation}
where the operation $<\bm{a},\bm{b}>$ denotes the inner product of  $\bm{a}$ and $\bm{b}$.
The second inequality is due to the Cauchy-Schwartz inequality and Lemma 1. The third inequality is due to Lemma 3 proven in \cite{Imsam} and here the function associated with the Bregman divergence is $2$-strongly convex.
\end{proof}

{Theorem \ref{theorem_distance} gives the bound of the distance $\mathbb{E}\Big[ \|\hat{\bm{x}}_i^{(k)}-\bm{x}_i^{(k)}\|_2^2 \Big ]$ w.r.t. the variance of $\nabla h_{i,\mathcal{N}_{\text{c},i}}^{(k)}$, which corresponds to the randomness brought by choosing the communication nodes.
As $k\to\infty$, the distance between $\bm{x}_i^{(k)}$ and $\hat{\bm{x}}_i^{(k)}$ converges to $0$ with $\eta_k\to0$, which indicates that the convergence holds in the limit. Theorem \ref{theorem_distance}, together with Theorem \ref{theorem_convergence}, enhances the convergence properties of SCCD-ADMM.}

According to the variance bound given in Theorem \ref{theorem_distance}, we consider reducing this variance as much as possible referring to the similar analysis technique for importance sampling{.}
The probability that node $i$ selects neighboring node $j$ in the $k$-th iteration is denoted as $w_{ij}^{(k)}$, which is proportional to the importance of node $j$.

\begin{theorem}\label{theorem_distribution}
Choosing the communication nodes is seen as a sampling process as mentioned before, and the best distribution of sampling is
\begin{equation*}
w_{ij}^{(k)}=\frac{\|\bm{x}_i^{(k-1)}-\bm{x}_j^{(k-1)}\|_2}{\sum_{l\in\mathcal{N}_{i}}\|\bm{x}_i^{(k-1)}-\bm{x}_j^{(k-1)}\|_2}.
\end{equation*}
\end{theorem}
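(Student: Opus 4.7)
The plan is to derive the optimal sampling law by minimizing the variance bound furnished by Theorem~\ref{theorem_distance}. Since Theorem~\ref{theorem_distance} shows that $\mathbb{E}[\|\hat{\bm{x}}_i^{(k)}-\bm{x}_i^{(k)}\|_2^2]$ is controlled by $\mathbb{V}[\nabla h_{i,\mathcal{N}_{\text{c},i}}^{(k)}(\bm{x}_i^{(k-1)})]$, the sampling distribution that minimizes this variance will be the one that best concentrates the stochastic iterate around its deterministic counterpart $\hat{\bm{x}}_i^{(k)}$. So the problem reduces to a constrained optimization over $\{w_{ij}^{(k)}\}_{j\in\mathcal{N}_i}$ with the simplex constraint $\sum_{j\in\mathcal{N}_i}w_{ij}^{(k)}=1$.

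First I would expand $\nabla h_{i,\mathcal{N}_{\text{c},i}}^{(k)}(\bm{x}_i^{(k-1)})$ explicitly. Differentiating the definition in~(\ref{hupdate}) and evaluating at $\bm{x}_i^{(k-1)}$ gives $\nabla h_{i,\mathcal{N}_{\text{c},i}}^{(k)}(\bm{x}_i^{(k-1)})=\frac{c}{Num_i}\sum_{j\in\mathcal{N}_{\text{c},i}}\frac{\bm{x}_i^{(k-1)}-\bm{x}_j^{(k-1)}}{w_{ij}^{(k)}}$. The unbiasedness identity derived just after~(\ref{hupdate}) (and used again in Lemma~\ref{punbiased}) says $\mathbb{E}[\nabla h_{i,\mathcal{N}_{\text{c},i}}^{(k)}(\bm{x}_i^{(k-1)})]=\nabla h_{i,\mathcal{N}_{i}}^{(k)}(\bm{x}_i^{(k-1)})$, so the variance reduces to $\mathbb{E}[\|\nabla h_{i,\mathcal{N}_{\text{c},i}}^{(k)}(\bm{x}_i^{(k-1)})\|_2^2]-\|\nabla h_{i,\mathcal{N}_{i}}^{(k)}(\bm{x}_i^{(k-1)})\|_2^2$, and only the first term depends on the sampling weights. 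Considering the $Num_i=1$ case for clarity, a direct computation gives
\begin{equation*}
\mathbb{E}\Big[\|\nabla h_{i,\mathcal{N}_{\text{c},i}}^{(k)}(\bm{x}_i^{(k-1)})\|_2^2\Big]=c^2\sum_{j\in\mathcal{N}_i}\frac{\|\bm{x}_i^{(k-1)}-\bm{x}_j^{(k-1)}\|_2^2}{w_{ij}^{(k)}}.
\end{equation*}

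Next I would minimize this quantity subject to $\sum_{j\in\mathcal{N}_i}w_{ij}^{(k)}=1$, which can be done either by a Lagrange-multiplier argument or, more elegantly, by Cauchy--Schwarz: writing $a_j\triangleq\|\bm{x}_i^{(k-1)}-\bm{x}_j^{(k-1)}\|_2$, one has
\begin{equation*}
\Big(\sum_{j\in\mathcal{N}_i}a_j\Big)^2=\Big(\sum_{j\in\mathcal{N}_i}\tfrac{a_j}{\sqrt{w_{ij}^{(k)}}}\sqrt{w_{ij}^{(k)}}\Big)^2\le \Big(\sum_{j\in\mathcal{N}_i}\tfrac{a_j^2}{w_{ij}^{(k)}}\Big)\Big(\sum_{j\in\mathcal{N}_i}w_{ij}^{(k)}\Big),
\end{equation*}
with equality iff $w_{ij}^{(k)}\propto a_j$. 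Normalizing yields exactly the claimed distribution $w_{ij}^{(k)}=\|\bm{x}_i^{(k-1)}-\bm{x}_j^{(k-1)}\|_2/\sum_{l\in\mathcal{N}_i}\|\bm{x}_i^{(k-1)}-\bm{x}_l^{(k-1)}\|_2$.

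The main obstacle I foresee is handling $Num_i>1$, i.e.\ sampling multiple neighbors per iteration. With sampling without replacement the cross-terms in the variance no longer decouple, so the above single-coordinate minimization does not immediately apply. The cleanest fix is to model the aggregate step as $Num_i$ i.i.d.\ draws (or equivalently to bound the without-replacement variance by its with-replacement counterpart), in which case the per-draw variance decomposes and the same optimum $w_{ij}^{(k)}\propto\|\bm{x}_i^{(k-1)}-\bm{x}_j^{(k-1)}\|_2$ is recovered for each draw. I would present this reduction as the justification for using the single-sample optimum as the sampling law in general, which closes the proof.
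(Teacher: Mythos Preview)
Your proposal is correct and follows essentially the same route as the paper: both reduce the problem to minimizing $\mathbb{V}[\nabla h_{i,\mathcal{N}_{\text{c},i}}^{(k)}(\bm{x}_i^{(k-1)})]$ over the simplex, expand to obtain the objective $\sum_{j\in\mathcal{N}_i}\|\bm{x}_i^{(k-1)}-\bm{x}_j^{(k-1)}\|_2^2/w_{ij}^{(k)}$ (up to terms independent of the weights), and then invoke Cauchy--Schwarz to identify the minimizer. Your treatment is in fact slightly more complete, since you explicitly discuss the $Num_i>1$ case via the i.i.d.\ reduction, whereas the paper's proof works only with the single-draw expression in~(\ref{37}) and leaves that extension implicit.
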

\begin{proof}
According to the result in Theorem \ref{theorem_distance}, in order to reduce the objective as much as possible, we should choose $w_{ij}^{(k)}$ as the solution of the following optimization
\begin{equation}\nonumber
\mathop{\min}_{\textbf{w}_i^{(k)},w_{ij}^{(k)}\in[0,1],\sum_{j\in\mathcal{N}_{i}}w_{ij}^{(k)}=1}\mathbb{V}\Big[\nabla h_{i,\mathcal{N}_{\text{c},i}}^{(k)}(\bm{x}_i^{(k-1)})\Big].
\end{equation}
Then we expand the optimization problem and have
\begin{equation}\label{37}
\begin{aligned}
&\mathbb{V}\Big[\nabla h_{i,\mathcal{N}_{\text{c},i}}^{(k)}(\bm{x}_i^{(k-1)})\Big]\\
&=\mathbb{E}\Big\|\nabla h_{i,\mathcal{N}_{i}}^{(k)}(\bm{x}_i^{(k-1)})-\nabla h_{i,\mathcal{N}_{\text{c},i}}^{(k)}(\bm{x}_i^{(k-1)})\Big\|_2^2\\
&=\mathbb{E}\Big\|\frac{c}{w_{ij}^{(k)}}(\bm{x}_i^{(k-1)}-\bm{x}_j^{(k-1)})-c\sum_{m\in\mathcal{N}_{i}}(\bm{x}_i^{(k-1)}-\bm{x}_m^{(k-1)})\Big\|_2^2.\\
\end{aligned}
\end{equation}
{To simplify the expression of $\mathbb{V}\Big[\nabla h_{i,\mathcal{N}_{\text{c},i}}^{(k)}(\bm{x}_i^{(k-1)})\Big]$, we have
\begin{equation}\label{38}
\begin{aligned}
&\mathbb{V}\Big[\nabla h_{i,\mathcal{N}_{\text{c},i}}^{(k)}(\bm{x}_i^{(k-1)})\Big]\\
=&\mathbb{E}\Big\|\frac{c}{w_{ij}^{(k)}}(\bm{x}_i^{(k-1)}-\bm{x}_j^{(k-1)})-c\sum_{m\in\mathcal{N}_{i}}(\bm{x}_i^{(k-1)}-\bm{x}_m^{(k-1)})\Big\|_2^2\\
=&c\mathbb{E}\Big\|\frac{1}{w_{ij}^{(k)}}(\bm{x}_i^{(k-1)}-\bm{x}_j^{(k-1)})\Big\|_2^2\\
&-2c\mathbb{E}\Big\|\frac{1}{w_{ij}^{(k)}}(\bm{x}_i^{(k-1)}-\bm{x}_j^{(k-1)})\sum_{m\in\mathcal{N}_{i}}(\bm{x}_i^{(k-1)}-\bm{x}_m^{(k-1)})\Big\|\\
&+c\mathbb{E}\Big\|\sum_{m\in\mathcal{N}_{i}}(\bm{x}_i^{(k-1)}-\bm{x}_m^{(k-1)})\Big\|_2^2\\
=&c\sum_{j\in\mathcal{N}_{i}}w_{ij}^{(k)}\times\frac{1}{{w_{ij}^{(k)}}^2}\|\bm{x}_i^{(k-1)}-\bm{x}_j^{(k-1)}\|_2^2\\
&-2c\Big\|\sum_{j\in\mathcal{N}_{i}}(\bm{x}_i^{(k-1)}-\bm{x}_j^{(k-1)})\Big\|\Big\|\sum_{m\in\mathcal{N}_{i}}(\bm{x}_i^{(k-1)}-\bm{x}_m^{(k-1)})\Big\|\\
&+c\Big\|\sum_{m\in\mathcal{N}_{i}}(\bm{x}_i^{(k-1)}-\bm{x}_m^{(k-1)})\Big\|_2^2\\
=&c\sum_{j\in\mathcal{N}_{i}}\frac{1}{w_{ij}^{(k)}}\|\bm{x}_i^{(k-1)}-\bm{x}_j^{(k-1)}\|_2^2\\
&-c\Big\|\sum_{j\in\mathcal{N}_{i}}(\bm{x}_i^{(k-1)}-\bm{x}_j^{(k-1)})\Big\|_2^2.\\
\end{aligned}
\end{equation}
Since the second term in (\ref{38}) is fixed, minimizing $\mathbb{V}\Big[\nabla h_{i,\mathcal{N}_{\text{c},i}}^{(k)}(\bm{x}_i^{(k-1)})\Big]$ can be simplified as}
\begin{equation}
\min{\sum_{j\in\mathcal{N}_{i}}\frac{1}{w_{ij}^{(k)}}\|\bm{x}_i^{(k-1)}-\bm{x}_j^{(k-1)}\|_2^2}.
\end{equation}
According to the Cauchy-Schwarz inequality, it is easy to verify that the solution of the optimization problem is
\begin{equation}
w_{ij}^{(k)}=\frac{\|\bm{x}_i^{(k-1)}-\bm{x}_j^{(k-1)}\|_2}{\sum_{l\in\mathcal{N}_{i}}\|\bm{x}_i^{(k-1)}-\bm{x}_l^{(k-1)}\|_2}.
\end{equation}
\end{proof}
{By minimizing the variance, Theorem \ref{theorem_distribution} derives the sampling distribution for each node as given in (\ref{distribution1}).}
This result indicates that the neighboring node $j$ is more important for node $i$ if its update result $\bm{x}_j$ has a larger distance from $\bm{x}_i$. It is intuitive since a larger distance means that the difference in the data distribution between the two nodes is greater and thus node $i$ will get a more accurate result for the whole data space, with the help of the variables updated in node $j$.
%Above we consider the condition where node $i$ chooses only one node to communicate with.
As node $i$ chooses more communication nodes, it samples from its neighboring nodes and the variance becomes lower, which means a faster rate of convergence.
{The following Theorem \ref{theorem_variance} shows the distance bound given the sampling distribution in Theorem \ref{theorem_distribution}.}

\begin{theorem}\label{theorem_variance}
When node $i$ samples its communication nodes with the distribution $w_{ij}^{(k)}=\frac{\|\bm{x}_i^{(k-1)}-\bm{x}_j^{(k-1)}\|_2}{\sum_{l\in\mathcal{N}_{i}}\|\bm{x}_i^{(k-1)}-\bm{x}_l^{(k-1)}\|_2}$, the variance of the algorithm is bounded by
\begin{equation}
\begin{aligned}
\mathbb{E}\Big\|\hat{\bm{x}}_i^{(k)}-\bm{x}_i^{(k)}\Big\|_2^2\le\frac{c^2\eta_k^2}{2}&\Big[(\sum_{j\in\mathcal{N}_{i}}\|\bm{x}_i^{(k-1)}-\bm{x}_j^{(k-1)}\|_2)^2-\\
&\|\sum_{j\in\mathcal{N}_{i}}(\bm{x}_i^{(k-1)}-\bm{x}_j^{(k-1)})\|_2^2\Big].
\end{aligned}
\end{equation}
\end{theorem}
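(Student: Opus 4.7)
The plan is to combine the two ingredients already developed earlier in the section: the distance bound from Theorem \ref{theorem_distance}, which controls $\mathbb{E}\|\hat{\bm{x}}_i^{(k)}-\bm{x}_i^{(k)}\|_2^2$ by $\tfrac{\eta_k^2}{2}\mathbb{V}[\nabla h_{i,\mathcal{N}_{\text{c},i}}^{(k)}(\bm{x}_i^{(k-1)})]$, and the closed-form expansion of that variance derived inside the proof of Theorem \ref{theorem_distribution} (equation (\ref{38})). Once these are in place, all that remains is a direct substitution of the optimal sampling probabilities $w_{ij}^{(k)}$ from Theorem \ref{theorem_distribution} into the variance expression.

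First, I would recall from Theorem \ref{theorem_distance} the inequality
\begin{equation*}
\mathbb{E}\bigl\|\hat{\bm{x}}_i^{(k)}-\bm{x}_i^{(k)}\bigr\|_2^2 \le \tfrac{\eta_k^2}{2}\,\mathbb{V}\bigl[\nabla h_{i,\mathcal{N}_{\text{c},i}}^{(k)}(\bm{x}_i^{(k-1)})\bigr],
\end{equation*}
and from equation (\ref{38}) the identity (carrying the factor $c^2$ that comes from expanding the squared norm)
\begin{equation*}
\mathbb{V}\bigl[\nabla h_{i,\mathcal{N}_{\text{c},i}}^{(k)}(\bm{x}_i^{(k-1)})\bigr] = c^2\!\sum_{j\in\mathcal{N}_i}\frac{\|\bm{x}_i^{(k-1)}-\bm{x}_j^{(k-1)}\|_2^2}{w_{ij}^{(k)}} - c^2\Bigl\|\sum_{j\in\mathcal{N}_i}(\bm{x}_i^{(k-1)}-\bm{x}_j^{(k-1)})\Bigr\|_2^2.
\end{equation*}

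Next, I would plug in the sampling distribution stated in the theorem. With $w_{ij}^{(k)}=\|\bm{x}_i^{(k-1)}-\bm{x}_j^{(k-1)}\|_2/S_i$, where $S_i\triangleq\sum_{l\in\mathcal{N}_i}\|\bm{x}_i^{(k-1)}-\bm{x}_l^{(k-1)}\|_2$, the first sum telescopes cleanly: one factor of $\|\bm{x}_i^{(k-1)}-\bm{x}_j^{(k-1)}\|_2$ cancels, and the remaining sum over $j$ produces another copy of $S_i$, yielding $S_i^2=\bigl(\sum_{j\in\mathcal{N}_i}\|\bm{x}_i^{(k-1)}-\bm{x}_j^{(k-1)}\|_2\bigr)^2$.

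Substituting back gives exactly
\begin{equation*}
\mathbb{E}\bigl\|\hat{\bm{x}}_i^{(k)}-\bm{x}_i^{(k)}\bigr\|_2^2 \le \tfrac{c^2\eta_k^2}{2}\Bigl[\bigl(\textstyle\sum_{j\in\mathcal{N}_i}\|\bm{x}_i^{(k-1)}-\bm{x}_j^{(k-1)}\|_2\bigr)^2 - \bigl\|\sum_{j\in\mathcal{N}_i}(\bm{x}_i^{(k-1)}-\bm{x}_j^{(k-1)})\bigr\|_2^2\Bigr],
\end{equation*}
which is the claimed bound. There is essentially no hard step in this derivation — the two nontrivial pieces (the variance-to-distance inequality and the closed-form variance expansion) are already proved, so the only work is the algebraic simplification $\sum_j \|\bm{x}_i^{(k-1)}-\bm{x}_j^{(k-1)}\|_2^2/w_{ij}^{(k)}=S_i^2$, which follows directly from the form of the optimal weights. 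The only minor care point is to keep track of the $c^2$ factor correctly through the squared norm expansion so that it matches the $c^2$ in the statement.
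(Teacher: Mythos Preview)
Your proposal is correct and follows essentially the same route as the paper: substitute the optimal $w_{ij}^{(k)}$ into the variance identity (\ref{38}) and then invoke Theorem~\ref{theorem_distance}. Your remark about tracking the $c^2$ factor is apt, since the paper's display (\ref{38}) actually writes $c$ where $c^2$ is needed to match the $c^2$ appearing in the final bound.
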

{\begin{proof}
Given the expression of $w_{ij}^{(k)}$, we substitute it into (\ref{38}) and we have
\begin{equation}
\begin{aligned}\label{39}
&\mathbb{V}\Big[\nabla h_{i,\mathcal{N}_{\text{c},i}}^{(k)}(\bm{x}_i^{(k-1)})\Big]\\
=&c\sum_{j\in\mathcal{N}_{i}}\frac{1}{w_{ij}^{(k)}}\|\bm{x}_i^{(k-1)}-\bm{x}_j^{(k-1)}\|_2^2\\
&-c\Big\|\sum_{j\in\mathcal{N}_{i}}(\bm{x}_i^{(k-1)}-\bm{x}_j^{(k-1)})\Big\|_2^2\\
=&c\sum_{j\in\mathcal{N}_{i}}\|\bm{x}_i^{(k-1)}-\bm{x}_j^{(k-1)}\|_2\sum_{l\in\mathcal{N}_{i}}\|\bm{x}_i^{(k-1)}-\bm{x}_l^{(k-1)}\|_2\\
&-c\Big\|\sum_{j\in\mathcal{N}_{i}}(\bm{x}_i^{(k-1)}-\bm{x}_j^{(k-1)})\Big\|_2^2\\
=&c\Big[(\sum_{j\in\mathcal{N}_{i}}\|\bm{x}_i^{(k-1)}-\bm{x}_j^{(k-1)}\|_2)^2-\|\sum_{j\in\mathcal{N}_{i}}(\bm{x}_i^{(k-1)}-\bm{x}_j^{(k-1)})\|_2^2\Big].\\
\end{aligned}
\end{equation}
Then according to Theorem \ref{theorem_distance}, we can get the variance bound under the given $w_{ij}^{(k)}$.
\end{proof}}

\section{Experiment}\label{experiment}
In this section, we evaluate the performance of the proposed SCCD-ADMM algorithm in terms of the convergence and the total cost of the system. {Additionally, we also show the impact of the network topology on the algorithms as well as the delay comparison.}

Logistic regression, as a classification problem using optimization to get the maximum probability result, is set as the optimization problem in the experiments. It is widely applied in machine learning area.
We define the objective function as:
\begin{equation}
 	\min_{\bm{x}_i} \sum_{i=1}^N\sum_{j=1}^{m}[\text{log}(1+e^{\bm{x}_i^T\boldsymbol{A}_{ij}})-\bm{b}_{ij}\bm{x}_i^T\boldsymbol{A}_{ij}]+\lambda r(\bm{x}_i).
\end{equation}
In the above objective function, each node has $m$ samples and the dimension of the feature vector is $M$. $(\boldsymbol{A}_i, \bm{b}_i)$ is the training data collected by node $i$, where $\boldsymbol{A}_i=[\boldsymbol{A}_{i1},...,\boldsymbol{A}_{im}]^T\in \mathds{R}^{M\times m}$ is the feature matrix and $\bm{b}_i=[b_{i1},...,{b}_{im}]^T\in\mathds{R}^m$ is the binary label. $\bm{x}_i$ is the variable to be optimized. In the following experiments, we set $m=20$ and $M=100$.

The network is generated using \textsl{networkx} packages in python. We use the \textsl{Erdos\_Renyi} random network model and set the parameter as $N=30$ and $p=0.5$, which means the network has $30$ nodes and each node has a probability of $0.5$ to connect with another node. The generated network topology is shown in Fig. \ref{fig2} and is used in the following experiments {in subsections A and B}.
The training data $\boldsymbol{A}_i$ and the `true' weight vector $\bm{x}_i^{\text{True}}$ are generated randomly and the labels $\bm{b}_i$ are generated using $\bm{b}_i=\text{sign}{(\boldsymbol{A}_i^T\bm{x}_i^{\text{True}}+v_i)}$, where $v_i$ is the noise vector and $v_i\sim\mathcal{N}(0,0.1)$.
%We denote the ratio of communication cost and computation cost as $co_{\text{rat}}=\dfrac{C_\text{cmm}_{ii}}{C_\text{cmp}_i}$, which can describe the numerical relationship between the communication cost and the computation cost.
The stopping criterions are $acc<0.1$ and $cserr<0.1$. $obj^{*}$ is calculated through centralized ADMM using all training data. We set $\lambda=0.4$.
\begin{figure} [!htp]
\vspace{-0.15 cm}
    \centering
       \includegraphics[width=0.9\linewidth]{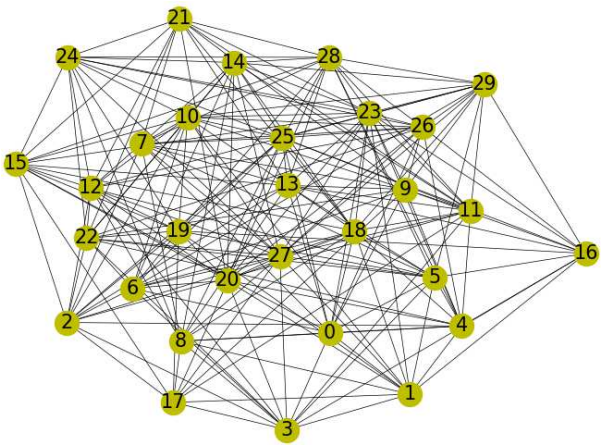}
	  \caption{Network topology}
	  \label{fig2}
\vspace{-0.4 cm}
\end{figure}

\subsection{Convergence Curve}\label{convergence}
In this part we apply the SCCD-ADMM algorithm to both $\ell_1$-regularized and $\ell_2$-regularized logistic regression problems.
\subsubsection{$\ell_2$-regularized objective}\label{l2-norm}
For $\ell_2$-regularized logistic regression, each feature vector and weight vector are generated to have 50 nonzero values and the training data $\boldsymbol{A}_i$ obeys a normal distribution $\mathcal{N}(0, 5)$.
 Define $co_{\text{rat}}=\dfrac{C_\text{cmm}}{C_\text{cmp}}$ which describes the numerical relationship between the communication cost and the computation cost. We examine the convergence of the SCCD-ADMM algorithm with different $co_{\text{rat}}$, i.e., $co_{\text{rat}}=0.1(c=0.3)$, $co_{\text{rat}}=0.6(c=0.3)$ and $co_{\text{rat}}=1.2(c=0.2)$. The step size of the searching procedure is set as $stepsize=2$. The traditional D-ADMM algorithm is used for comparison.
We set $\eta_k=\frac{D}{\sqrt{2k}}$ and $D=0.3$.
\begin{figure}[!htp]
\centering
\vspace{-0.25 cm}
\subfigure[Average of $Num_i$ of D-ADMM and SCCD-ADMM]{
\vspace{-0.5 cm}
	\label{a}
\vspace{-0.5 cm}
	\includegraphics[width=0.44\textwidth ,height=0.3\textwidth]{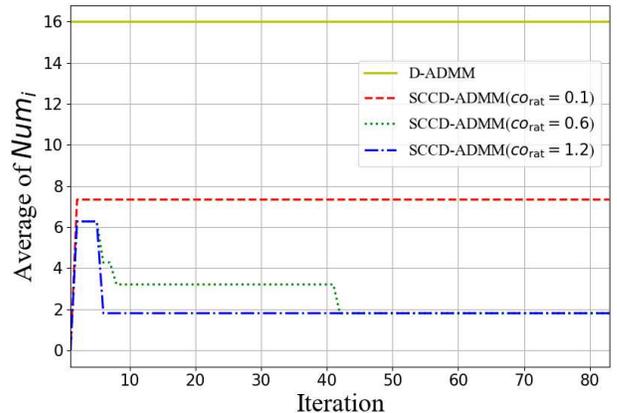}}
	
\vspace{-0.22 cm}
\subfigure[Accuracy Curve of D-ADMM and SCCD-ADMM]{
\vspace{-0.5 cm}
	\label{a}
\vspace{-0.5 cm}
	\includegraphics[width=0.44\textwidth ,height=0.3\textwidth]{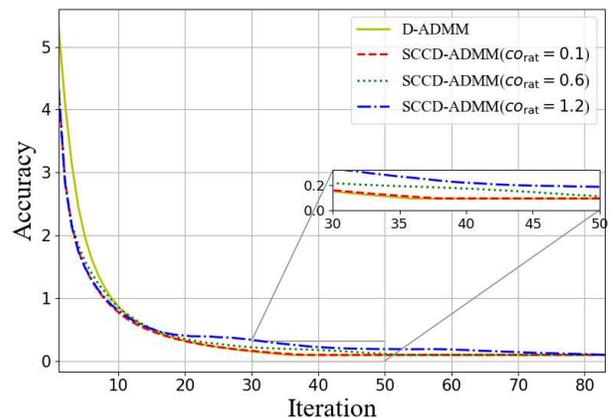}}

\vspace{-0.22 cm}
\subfigure[Consensus Error Curve of D-ADMM and SCCD-ADMM]{
	\vspace{-0.2 cm}
	\label{b}
	\vspace{-0.2 cm}
	\includegraphics[width=0.46\textwidth ,height=0.3\textwidth]{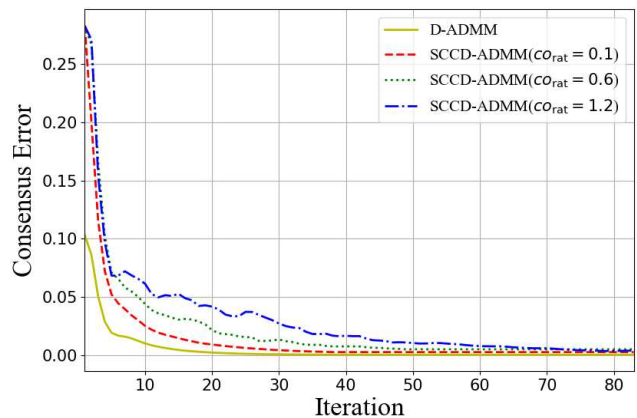}}
\caption{Average of $Num_i$ and Convergence of Different Algorithms for $\ell_2$-regularized Problem}
\vspace{-0.3 cm}
\label{fig3}
\vspace{-0.3 cm}
\end{figure}

The result is shown in Fig. 3. Fig. \ref{fig3}(a) shows how the average number of communication nodes of a node changes with the increase of iterations.
In the first iteration, each node only communicates with one of its neighbors as mentioned in \textbf{Algorithm \ref{searchnum}}. After that, $Num_i$ is determined by our proposed searching procedure.
It can be seen that with the increase of $co_{\text{rat}}$, the number of communication nodes reduces. It is because of the fact that higher $co_{\text{rat}}$ indicates that communication is more costly than computation, and hence according to our searching evaluation function, $Num_i$ is reduced to save communication cost. By contrast, the number of communication nodes in D-ADMM is always equal to that of the neighboring nodes, which is much larger than the number of the selected nodes in SCCD-ADMM.

Fig. \ref{fig3}(b) and (c) show the curves of the accuracy and the consensus error respectively. The consensus error gets converged before the accuracy, which suggests that it is the accuracy which mainly constrains the convergence in the case of  $\ell_2$-regularized objective.
The result shows that with $co_{\text{rat}}$ increasing, the convergence rate becomes slower, which is because high $co_{\text{rat}}$ leads to smaller $Num_i$ as shown in Fig. \ref{fig3}(a) and higher variance.
With less nodes communicating and more iterations to convergence, communication cost reduces at the price of higher computation cost, which shows the collaboration of communication and computation.
The figures of the cost will be shown in Section \ref{totalcost}.
Although SCCD-ADMM needs more iterations compared with D-ADMM, it can always converge and show great performance in terms of the cost.

\subsubsection{$\ell_1$-regularized objective}\label{l1nom}
Many large scale problems require the data to be sparse and thus $\ell_1$-regularized optimization is also the interest of research. In this part, we show the feasibility of our algorithm for $\ell_1$ norm.
Specifically, the local objective function for each node is $\phi_i(\bm{x})=f_i(\bm{x}_i)+0.4\|\bm{x}_i\|_1$. Each feature vector and the weight vector are generated to have 10 nonzero values to show the sparsity of the problem. The elements in the two vectors are generated randomly in the range of $[-1, 1]$.
We use FISTA \cite{FISTA} to solve the local optimization. Then for SCCD-ADMM, each node updates following
\begin{align*}
\tilde{\bm{x}}_i^{(l)}=\max\bigg\{-a, & \min\Big\{a, \mathcal{S}[\bm{z}_i^{(l-1)}-\rho_i^{(l)}\times\nabla h_{i, \mathcal{N}_{\text{c},i}}^{(k)}(\bm{z}_i^{(l-1)}) \\
					&+ 2\times \frac{\bm{z}_i^{(l-1)}-\bm{x}_i^{(k-1)}}{\eta_k}, \frac{\beta\rho_i^{(l)}}{N}]\Big\}\bigg\},
\end{align*}
\begin{align*}
\bm{z}_i^{(l-1)}=\tilde{\bm{x}}_i^{(l)}+\frac{l-1}{l+2}(\tilde{\bm{x}}_i^{(l)}-\tilde{\bm{x}}_i^{(l-1)}),
\end{align*}
where $l$ is the inner iteration number and $\mathcal{S}$ is the soft-threshold operator defined as $\mathcal{S}[m, n]\triangleq(m-n)_{+}+(-m-n)_{+}$ and $(x)_{+}\triangleq \max\{x, 0\}$. Each element of the variables is in the range of  $[-a,a]$.
Here, we set $\rho_i^{(l)}=0.01$, $\beta=0.1$ and $a=1$. The stopping criterion of the sub-optimization problem is $prg=\|\bm{z}_i^{(l-1)}-\tilde{\bm{x}}_i^{(l)}\|_2/(\rho_i^{(l)}\sqrt{K})<10^{-2}$. The optimal objective is calculated with all training data and $prg<10^{-6}$.

\begin{figure}[!htp]
%\small
\centering
\vspace{-0.3 cm}
\subfigure[Average of $Num_i$ of D-ADMM and SCCD-ADMM]{

\vspace{-0.5 cm}
\label{a}
	\vspace{-0.5 cm}
	\includegraphics[width=0.45\textwidth ,height=0.3\textwidth]{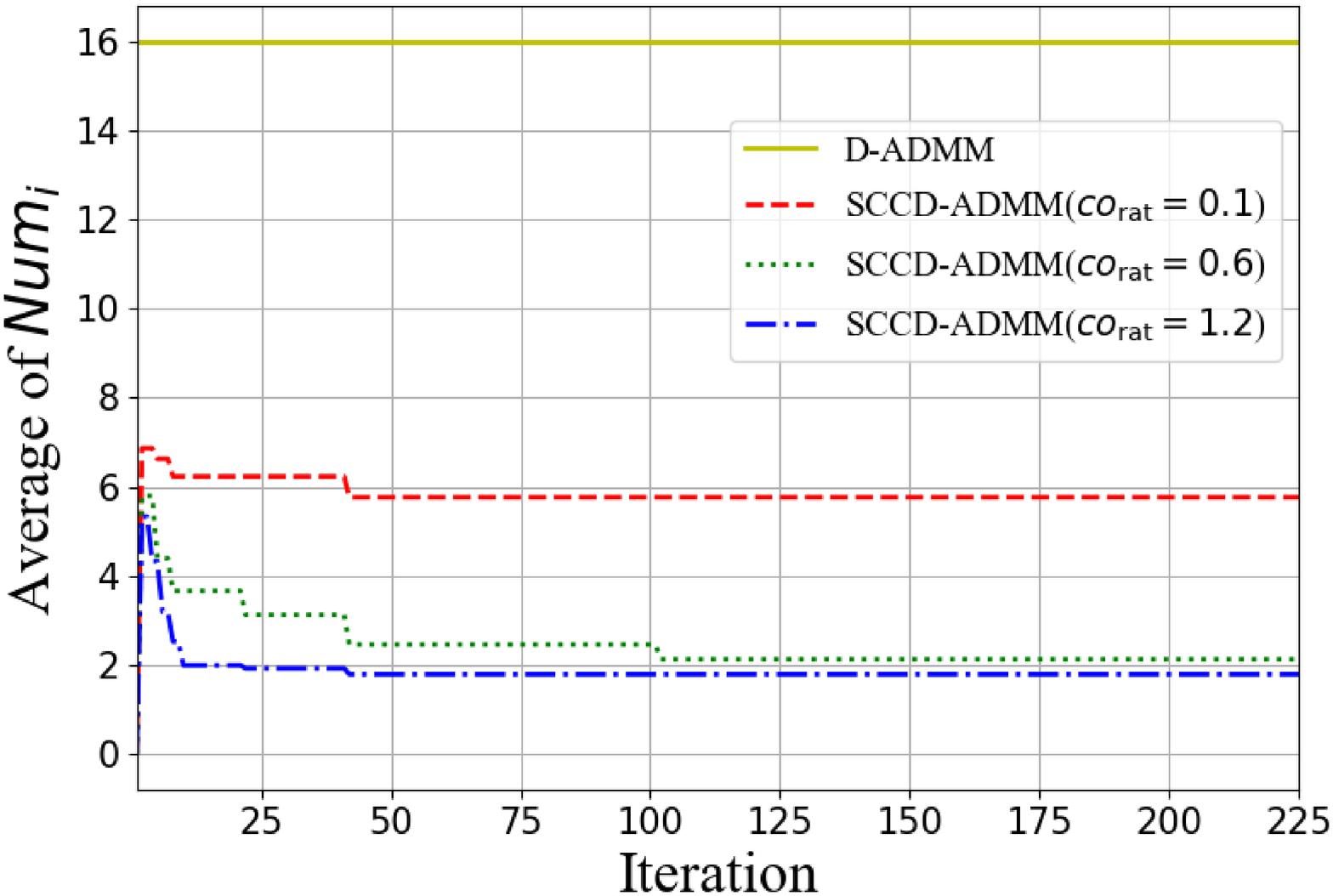}}

\vspace{-0.22 cm}
\subfigure[Accuracy Curve of D-ADMM and SCCD-ADMM]{

\vspace{-0.5 cm}
\label{a}
	\vspace{-0.5 cm}
	\includegraphics[width=0.45\textwidth ,height=0.3\textwidth]{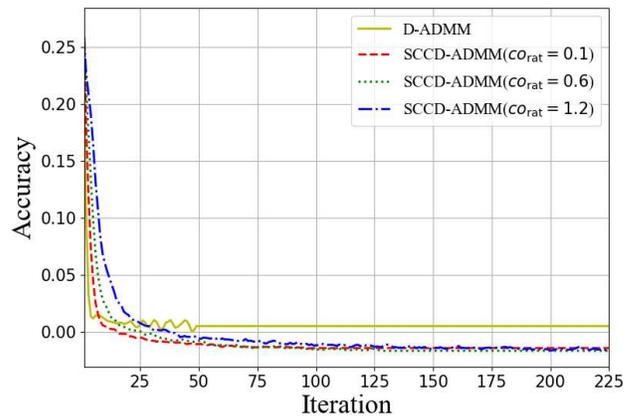}}

\vspace{-0.22 cm}
\subfigure[Consensus Error Curve of D-ADMM and SCCD-ADMM]{
	\vspace{-0.2 cm}
	\label{b}
	\vspace{-0.2 cm}
	\includegraphics[width=0.45\textwidth ,height=0.3\textwidth]{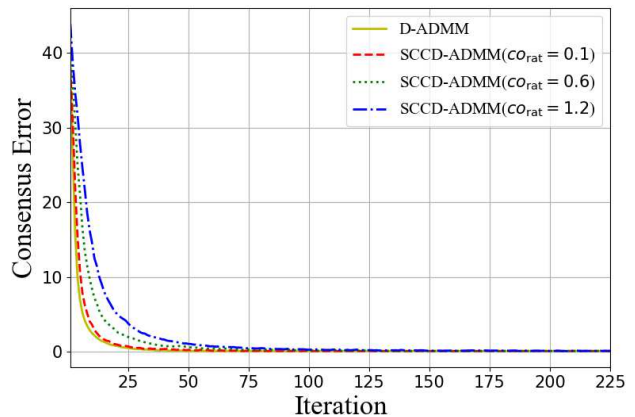}}

\caption{Average of $Num_i$ and Convergence of Different Algorithms for $\ell_1$-regularized problem}
\vspace{-0.3 cm}
\label{fig4}
\vspace{-0.3 cm}
\end{figure}

We implement the algorithm on different $co_{\text{rat}}$, i.e., $co_{\text{rat}}=0.1(c=0.007)$, $co_{\text{rat}}=0.6(c=0.007)$ and $co_{\text{rat}}=1.2(c=0.005)$. We set $stepsize=2$. The penalty parameter of D-ADMM is set as $c=0.006$.
The experiment result is shown in Fig. \ref{fig4}. Fig. \ref{fig4}(a) shows the same information as Fig. \ref{fig3}(a) that higher $co_{\text{rat}}$ results in smaller number of communication nodes.
As illustrated before, the main constraint on the convergence of $\ell_1$-regularized problem is consensus error and thus it can be seen in Fig. \ref{fig4}(b) and (c) that the accuracy reaches the stopping criterion ahead of the consensus error.
In addition, with the increase of $co_{\text{rat}}$, the number of communication nodes becomes smaller and thus convergence rate becomes slower. The experiment result validates that SCCD-ADMM can be applied to $\ell_1$-norm problem and also shows the tradeoff between communication and computation.

\subsection{Total Cost of the System}\label{totalcost}
In this subsection, we evaluate the computation and communication costs of the proposed SCCD-ADMM algorithm under different step sizes of the searching procedure and show its reduction of total cost compared with D-ADMM.
On the other hand, we also represent the performance of the tradeoff between the communication and computation costs, which aims to minimize the total cost of the system by assigning the proper number of communication nodes.

%In order to show the validity of the tradeoff resulted from the searching procedure, we compare the algorithm with a suboptimal condition.
%We simulate the suboptimal results by testing different numbers of communication nodes in each $co_{\text{rat}}$ condition and selecting the one with the least total cost as the suboptimal number.
%We cannot exhaustively test all the numbers of communication nodes, which is as large as $\prod_{i=1}^N d_i$, so we get the suboptimal result with the limited set of numbers by reducing the same number for each node, i.e, node $i$ communicates with $\max{(d_i-r, 1)}, \forall i\in\mathcal{V}$ nodes and $r$ ranges from $0$ to $\max{d_i}$.
%We show by experiments that the suboptimal condition can be almost acheived by our proposed SCCD-ADMM algorithm.

The objective function is set as $\sum_{i=1}^N\sum_{j=1}^{m}[\text{log}(1+e^{\bm{x}_i^T\boldsymbol{A}_{ij}})-\bm{b}_{ij}\bm{x}_i^T\boldsymbol{A}_{ij}]+0.4\|\bm{x}_i\|_2$. The training data is generated by the same way as in Section \ref{convergence}. The network topology is fixed and we average the result after simulating for 100 times.
%In the searching process, we set the parameter $\alpha=0.005$.\
{
For simplicity, we normalize $C_\text{cmp}$ as $1$. Since $C_\text{cmp}$ and $C_\text{cmm}$ are the same kind of cost measure with the same unit as defined in Section \ref{DecideNumber}, then $C_\text{cmm}$ equals to $co_{\text{rat}}$.}
We select the $co_{\text{rat}}$ ranging from $0$ to $2.4$ and in the extreme condition of $co_{\text{rat}}=0$, each node communicates with all of its neighboring nodes.
When $co_{\text{rat}}$ is higher, the communication cost dominates the total cost and naturally the total cost can be greatly saved.
We set three different step sizes for the searching procedure as $stepsize=1, 2, 3$.
The communication cost, computation cost and total cost is shown in Fig. \ref{fig5}.

\begin{figure}[!htp]
%\small
\centering
\vspace{-0.22 cm}
\subfigure[Total Communication Cost of the System]{
	\vspace{-0.5 cm}
	\label{b}
	\vspace{-0.0 cm}
	\includegraphics[width=0.44\textwidth ,height=0.3\textwidth]{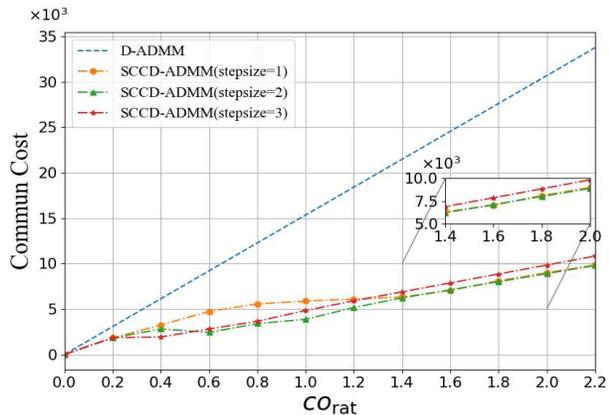}}
%\vspace{-0.5 cm}

\vspace{-0.22 cm}
\subfigure[Total Computation Cost of the System]{
	\vspace{-0.5 cm}
	\label{b}
	\vspace{-0.0 cm}
	\includegraphics[width=0.44\textwidth ,height=0.3\textwidth]{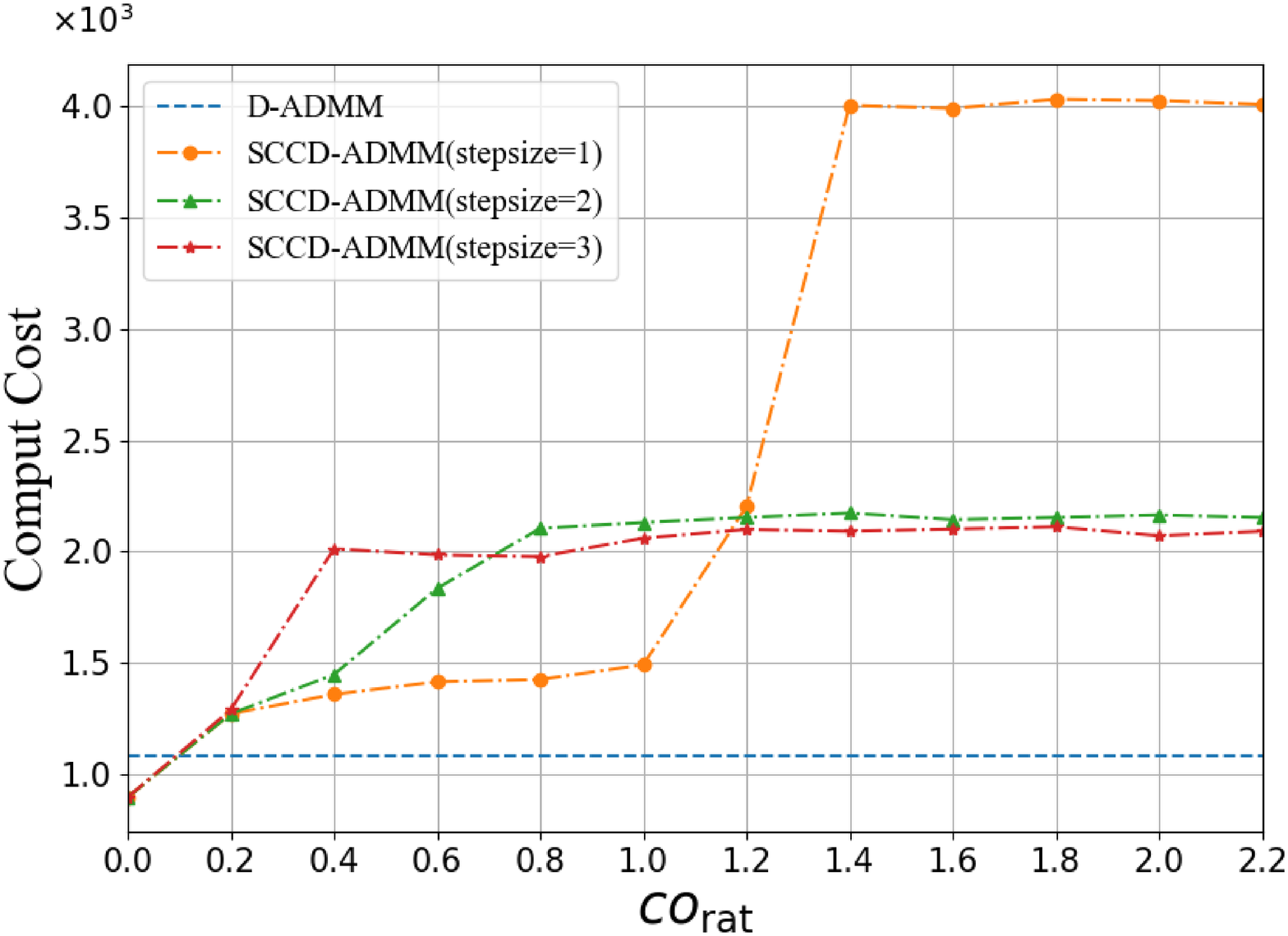}}
%\vspace{-0.1 cm}

\vspace{-0.22 cm}
\subfigure[Total Cost of the System]{
\vspace{-0.5 cm}
	\label{a}
\vspace{-0.5 cm}
	\includegraphics[width=0.44\textwidth ,height=0.3\textwidth]{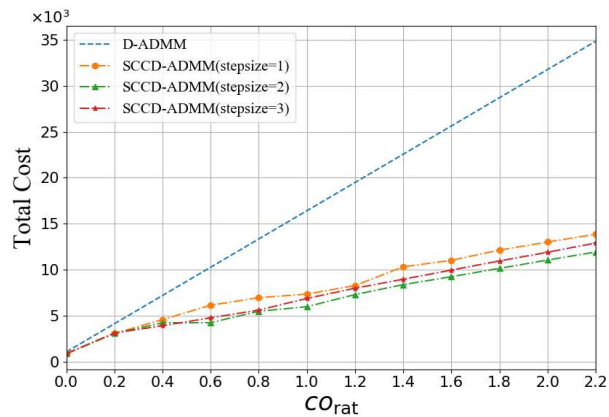}}
\caption{Cost Curves of $\ell_2$-regularized Problem}

\vspace{-0.3 cm}
\label{fig5}
%\vspace{-0.3 cm}
\end{figure}

Fig. \ref{fig5}(a) shows the total communication cost of SCCD-ADMM is largely lessened compared with that of D-ADMM.
In D-ADMM, since the nodes have to communicate with all of their neighboring nodes in each iteration, the total communication cost increases linearly with $co_{\text{rat}}$.
However, in SCCD-ADMM, each node only communicates with a small subset of its neighboring nodes and the number of communication nodes continues decreasing with the increase of iterative times until it stabilizes.
Hence, the proposed SCCD-ADMM algorithm can significantly reduce the communication cost. When $co_{\text{rat}}$ is higher, more communication cost can be saved.

It can be seen from Fig. \ref{fig5}(b) that when $co_{\text{rat}}=0$, the computation cost of SCCD-ADMM is lower than that of D-ADMM, which indicates that in this extreme condition, each node communicates with all of its neighboring nodes and the algorithm can converge a little faster than D-ADMM.
As $co_{\text{rat}}$ increases, the computation cost of SCCD-ADMM is higher than that of D-ADMM. It is because SCCD-ADMM sacrifices the computation cost for smaller communication cost, which shows the tradeoff between communication and computation.

Fig. \ref{fig5}(b) exhibits that when $co_{\text{rat}}$ is higher than an inflection point, which we define as $\tilde{co}_{\text{rat}}$, the computation cost will reach a plateau. This is because when $co_{\text{rat}}$ is high enough, the number of communication nodes of each node reduces to its minimum value and thus the computation cost, which only depends on the iteration times and the searching steps, will not increase. From Fig. \ref{fig5}(b), we can observe that $\tilde{co}_{\text{rat}}$ decreases as $stepsize$ increases, which means that in the searching procedure, as the step size increases, the minimum value of $Num_i$ is obtained under a smaller $co_{\text{rat}}$. Moreover, when $co_{\text{rat}}>1.4$, as shown in Fig. \ref{fig5}(a) and (b), while the communication costs for the conditions of $stepsize=1$ and $stepsize=2$ are almost the same, the computation cost of $stepsize=1$ is much higher than that of $stepsize=2$. This is because that when $stepsize=1$, the algorithm needs more searching attempts towards the minimum $Num_i$ and thus consumes more computation cost.

Fig. \ref{fig5}(c) shows the total cost of the system, from which we can see the total cost can be substantially saved with less communication cost. As $co_{\text{rat}}$ increases, the total cost of D-ADMM increases faster than SCCD-ADMM since D-ADMM requires more communication nodes. The result in Fig. \ref{fig5}(c) reveals that our algorithm has greater advantages for large $co_{\text{rat}}$. In addition, we can see that different step sizes for searching procedure lead to different performance of the total cost and the condition of $stepsize=2$ performs the best. On the other hand, while the total costs of SCCD-ADMM vary for different step sizes, their values are all signally smaller than that of D-ADMM.

%The total cost curves of SCCD-ADMM and the suboptimal condition indicates that the searching procedure results to a well-performed tradeoff which can greatly reduce the total cost of the system. Moreover, in some $co_{\text{rat}}$, we can observe the total cost of SCCD-ADMM is slightly lower than the suboptimal condition. This results from the adaptive searching procedure for each node to decide its own number of communication nodes.
The above experiment results validate that SCCD-ADMM is more energy-efficient compared with traditional D-ADMM and the tradeoff is effective. With the increase of $co_{\text{rat}}$, SCCD-ADMM can save more energy by reducing the communication cost.

{
\subsection{Impact of Network Topology}\label{networktopology}
An important hyperparameter for distributed algorithms is the network topology, including the size of the network and its connectivity, whose impact will be given in this subsection.
To begin with, The number of nodes in the network has an impact on the convergence of the algorithms, including SCCD-ADMM and D-ADMM. We set the number as $N=10, 30, 60, 100$.
We use the same global data to guarantee the same centralized solution, i.e., the number of samples in each node is $m=60, 20, 10, 6$ respectively.
We use the \textsl{Erdos\_Renyi} random network model by fixing the connectivity probability as $p=0.5$ to generate the network. We set $co_\text{rat}=0.3$ and simulate for 50 times. The iteration number until convergence is given in Table. \ref{network_size}.
\begin{table}[!htp]
\centering
\caption{Iteration Numbers Under Different Network Size}
\label{network_size}
\begin{tabular}{|l|l|l|l|l|}
\hline
          & N=10 & N=30 & N=60 & N=100\\ \hline
SCCD-ADMM & 83   & 104  & 300  & 590\\ \hline
D-ADMM    & 23   & 36   & 128  & 310\\ \hline
\end{tabular}
\end{table}

As we can see in Table. \ref{network_size}, both D-ADMM and SCCD-ADMM has more iterations to convergence as the size of network becomes larger, which is because that each node has smaller size of data and needs more iterations to achieve global optimal point.

Next we give the comparison of the performance under different network connectivities.
Since the connection probability $p$ in \textsl{Erdos\_Renyi} model means the connection probability of two nodes and can proportionally reveal the connectivity of the network, we use $p$ to represent the connectivity of the network for simplicity and to show the tendency of the performance. We set $p=0.1$, $p=0.5$ and $p=0.9$ respectively and average after simulating for 100 times. In order to compare the performance under different connectivities more clearly, we fix $stepsize=2$ and plot the curves of different $p$ in Fig. \ref{connectivity}.

\begin{figure}[!htp]
%\small
\centering
\vspace{-0.22 cm}
\subfigure[Communication Cost Under Different $p$]{
	\vspace{-0.5 cm}
	\label{b}
	\vspace{-0.0 cm}
	\includegraphics[width=0.44\textwidth ,height=0.3\textwidth]{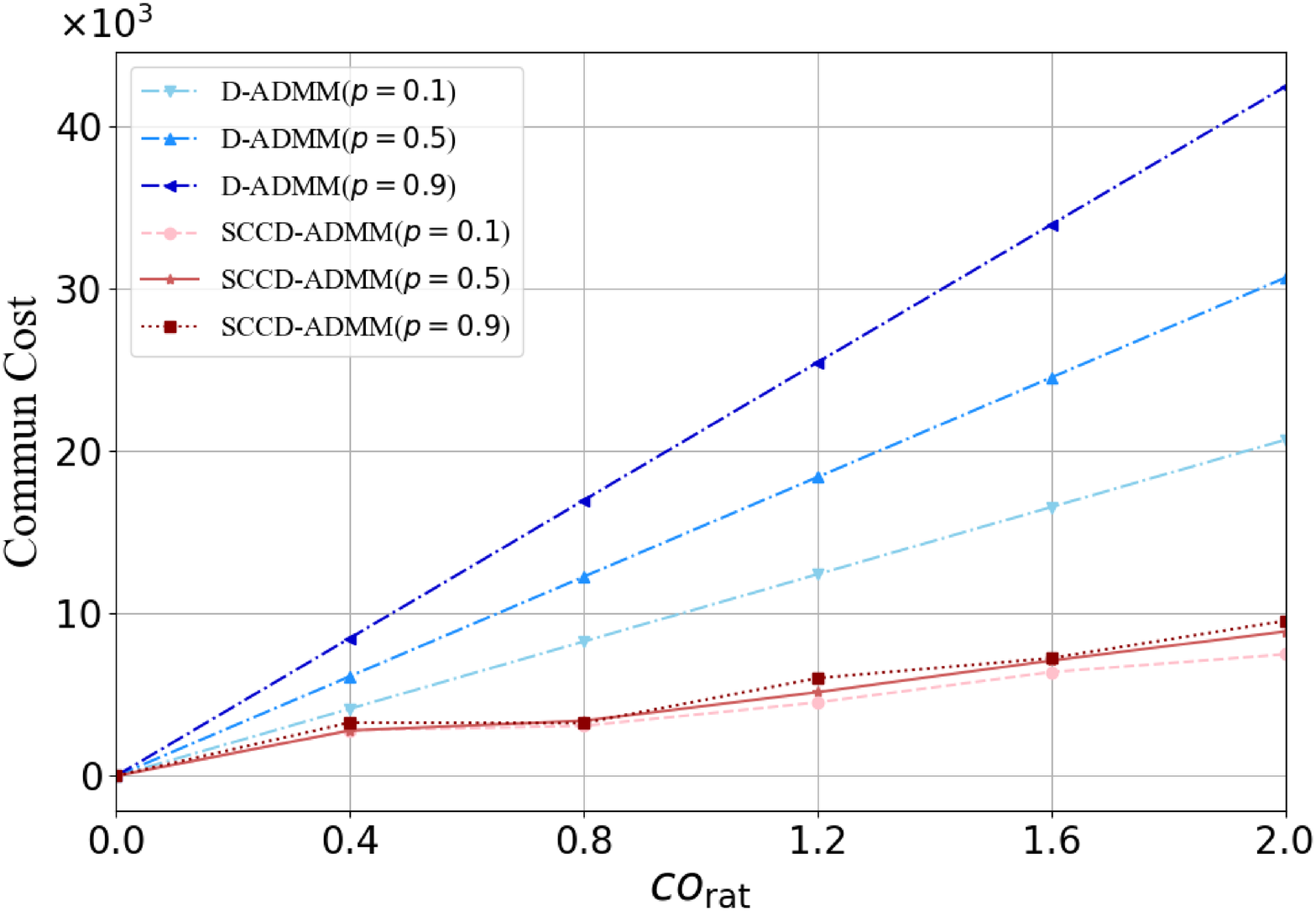}}
%\vspace{-0.5 cm}

\vspace{-0.22 cm}
\subfigure[Computation Cost Under Different $p$]{
	\vspace{-0.5 cm}
	\label{b}
	\vspace{-0.0 cm}
	\includegraphics[width=0.44\textwidth ,height=0.3\textwidth]{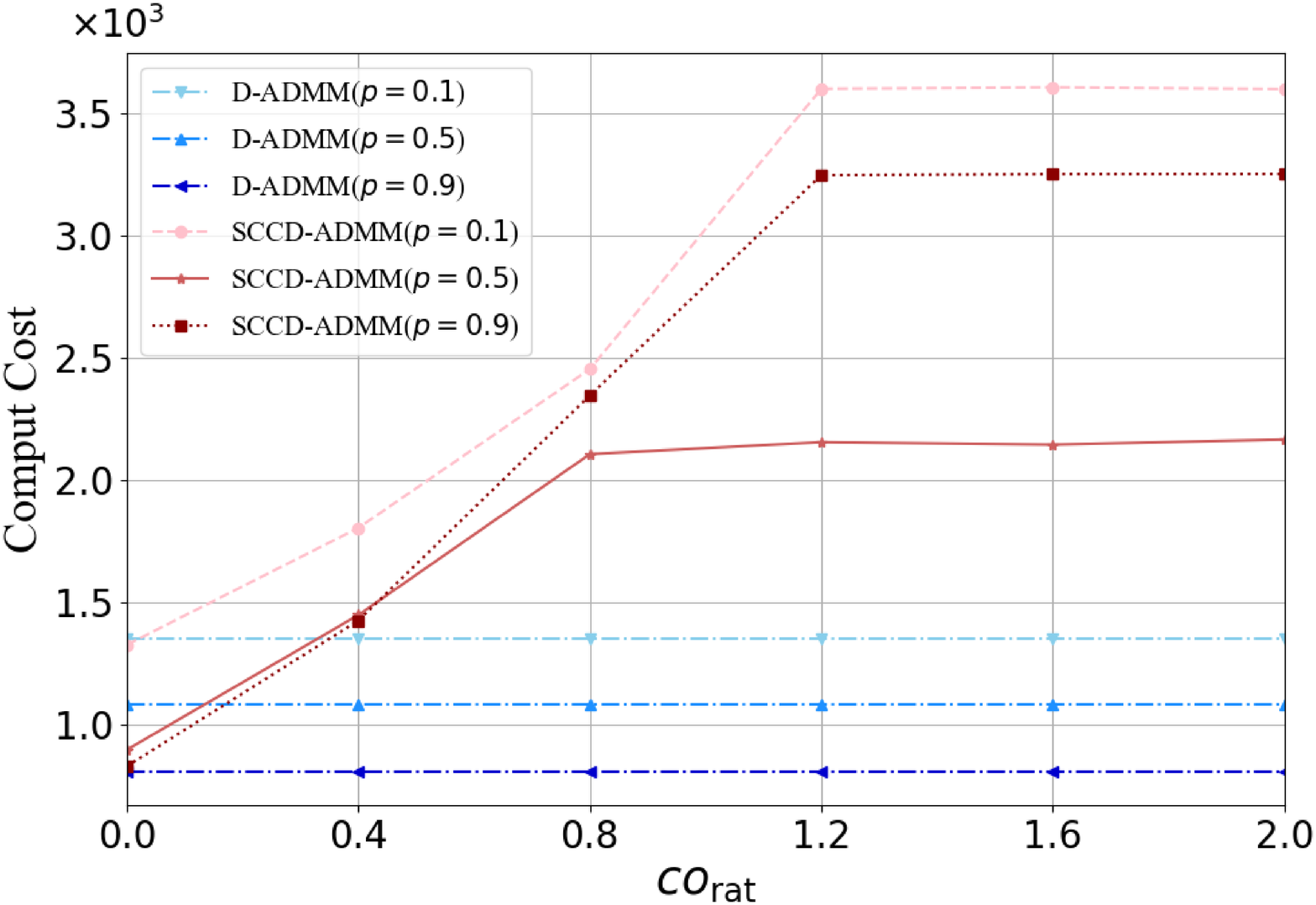}}
%\vspace{-0.1 cm}

\vspace{-0.22 cm}
\subfigure[Total Cost Under Different $p$]{
\vspace{-0.5 cm}
	\label{a}
\vspace{-0.5 cm}
	\includegraphics[width=0.44\textwidth ,height=0.3\textwidth]{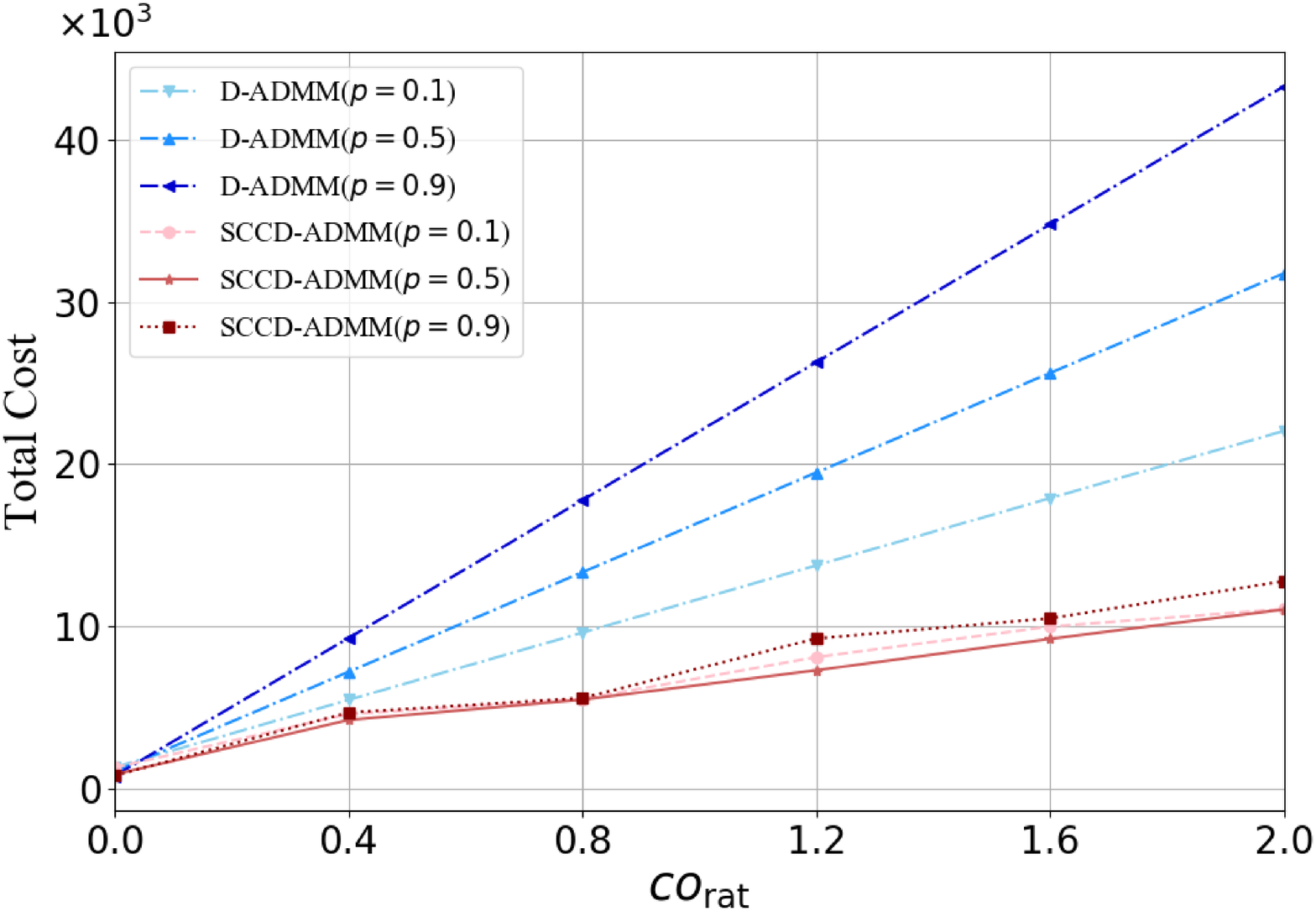}}
\caption{Cost Curves Under different $p$}

\vspace{-0.22 cm}
\label{connectivity}
\end{figure}

As shown in Fig. \ref{connectivity}(a), with the increasing connectivity of the graph, the communication cost of D-ADMM increases significantly due to a larger number of neighboring nodes, while the communication cost of SCCD-ADMM increases only slightly. This benefit comes from the searching procedure, which controls the number of communication nodes adaptively.
Fig. \ref{connectivity}(b) shows the computation cost comparison under different $p$. D-ADMM has less computation cost with larger $p$ because with more neighboring nodes to communicate with and more information received, the algorithm can converge faster.
However, SCCD-ADMM does not show the same phenomenon, because of the selective fewer communication nodes.
The computation cost of $p=0.5$ and $p=0.9$ is smaller than that of $p=0.1$, caused by smaller number of iterations.
This is because that the larger selection range of communication nodes makes it more likely to choose the better communication nodes w.r.t. the convergence rate of the algorithm.
On the other hand, the computation cost under $p=0.9$ is larger than that of $p=0.5$, resulted from additional steps for searching. One way to reduce the computation cost under larger $p$ is to consider a larger searching stepsize.
Fig. \ref{connectivity}(c) gives the overall cost, from which we can observe that SCCD-ADMM shows more benefit under larger $co_\text{rat}$ and larger $p$.
This is easy to understand since we aim to reduce the cost mainly by reducing communication links.
Thus SCCD-ADMM can save more cost compared with the traditional one, in the scenario where the unit communication cost is larger or the network is denser.

\subsection{Time Consumption Evaluation}\label{time}
In the practical implementation of distributed algorithms, the delay usually plays an important role.
Even though SCCD-ADMM focuses on the energy cost, we still need to give the performance evaluation and analysis of its delay comparison with the traditional D-ADMM.
In distributed ADMM algorithms, including SCCD-ADMM and D-ADMM, the delay comes from two aspects: the communication delay and the computation delay.

We consider the same setting as illustrated in Section \ref{convergence}1), with the same network topology and fixing $stepsize=2$.
For simplicity, we use the synchronous implementation. In this condition, the computation delay in each iteration is the maximum delay of updating process among all nodes and the communication delay in each iteration is the maximum transmission delay of all nodes. In addition, we assume that the transmission to one node from its communication nodes is conducted in a one-by-one way.
Then the communication delay of one node in one iteration is $\tau\times Num_i$, where $\tau$ is the transmission delay from one node to another and $Num_i$ is the number of communication nodes. Here we consider $\tau$ is equal among all transmission for simplicity.
The algorithms are implemented in Python 2.7 with Intel Core I5-9400F CPU with 2.9GHz. We consider two real communication systems with the communication rates of $54Mbps$ and $11Mbps$ under the IEEE standards \emph{IEEE 802.11g} \cite{IEEE 802.11g} and \emph{IEEE 802.11b} \cite{IEEE 802.11b} respectively.
The size of each transmission package is equal to the dimension of the variable $M=100$, where each data is a 32 bit float. Then $\tau=\frac{100\times32}{54\times10^{6}}=6.9\times 10^{-5}s$ for \emph{IEEE 802.11g} and $\tau=\frac{100\times32}{11\times10^{6}}=1.4\times 10^{-5}s$ for \emph{IEEE 802.11b}.
We compare D-ADMM and SCCD-ADMM in different $co_\text{rat}$ as shown in Table \ref{Time_table1} and Table \ref{Time_table2}.

\begin{table*}[!htp]
\centering
\caption{Delay comparison (in Seconds, \emph{IEEE 802.11g})}
\label{Time_table1}
\begin{tabular}{l|ll|ll|ll}
\hline
    & \multicolumn{2}{l|}{Communication Delay} & \multicolumn{2}{l|}{Computation Delay} & \multicolumn{2}{l}{Total Delay} \\ \hline
{$co_\text{rat}$}  & D-ADMM             & SCCD-ADMM           & D-ADMM            & SCCD-ADMM          & D-ADMM        & SCCD-ADMM       \\
0.1 & 0.04054            & 0.02455             & 0.20478           & 0.23588            & 0.24532       & 0.26043         \\
0.7 & 0.04054            & 0.01932             & 0.20478           & 0.28698            & 0.24532       & 0.30630         \\
1.4 & 0.04054            & 0.03097             & 0.20478           & 0.35800            & 0.24532       & 0.38997         \\
2.1 & 0.04054            & 0.03114             & 0.20478           & 0.35882            & 0.24532       & 0.38997         \\ \hline
\end{tabular}
\end{table*}

\begin{table*}[!htp]
\centering
\caption{Delay comparison (in Seconds, \emph{IEEE 802.11b})}
\label{Time_table2}
\begin{tabular}{l|ll|ll|ll}
\hline
    & \multicolumn{2}{l|}{Communication Delay} & \multicolumn{2}{l|}{Computation Delay} & \multicolumn{2}{l}{Total Delay} \\ \hline
{$co_\text{rat}$}  & D-ADMM             & SCCD-ADMM           & D-ADMM            & SCCD-ADMM          & D-ADMM        & SCCD-ADMM       \\
0.1 & 0.19900            & 0.12049             & 0.20478           & 0.23588            & 0.40378       & 0.35638         \\
0.7 & 0.19900            & 0.09485             & 0.20478           & 0.28698            & 0.40378       & 0.38183         \\
1.4 & 0.19900            & 0.15266             & 0.20478           & 0.35800            & 0.40378       & 0.51066         \\
2.1 & 0.19900            & 0.15287             & 0.20478           & 0.35882            & 0.40378       & 0.51170         \\ \hline
\end{tabular}
\end{table*}

As shown in both Table \ref{Time_table1} and \ref{Time_table2}, the communication time consumption of SCCD-ADMM is smaller than that of D-ADMM, resulted from a smaller number of communication nodes.
However, SCCD-ADMM has a relatively larger computation delay compared with that of D-ADMM. This comes from the additional searching process and a larger iteration number.
In Table \ref{Time_table1}, the total delay of SCCD-ADMM is larger than that of D-ADMM due to the high communication rate and the dominance of computation delay.
On the other hand, when the communication rate is low and the communication delay becomes dominant, SCCD-ADMM can still outperform D-ADMM in time delay under small $co_\text{rat}$ as shown in Table \ref{Time_table2}. This indicates the potential benefits of SCCD-ADMM w.r.t. time delay in the terrible communication environment.
In addition, as $co_\text{rat}$ increases, the computation delay of SCCD-ADMM increases as well because of a larger number of iterations resulted from fewer communication nodes. When $co_\text{rat}$ is large enough, the communication delay and computation delay reach a plateau and remain steady, which is also shown in computation cost curve as shown in Fig. \ref{fig5}(b).
}

\section{Conclusion}\label{conclusion}
%In this paper, we propose a novel distributed ADMM algorithm with synergetic communication and computation to save the total cost of the system while implementing the distributed ADMM algorithm.
%Specifically, each node adaptively determines the number of communication nodes following the given searching procedure, while the specific communication nodes are chosen according to the derived sampling distribution. After receiving the information from the selected neighboring nodes, each node updates its local information with the newly-designed update rule.
In this paper, a novel distributed optimization algorithm called SCCD-ADMM algorithm is proposed to save the total cost of the system while implementing the distributed ADMM algorithm.
In the algorithm, each node adaptively determines the number of communication nodes following the given searching procedure, while the specific communication nodes are chosen according to the derived sampling distribution. After receiving the information from the selected neighboring nodes, each node updates its local information with the newly-designed update rule and its convergence analysis is given.
Compared with the traditional distributed ADMM, the proposed algorithm reduces the communication nodes and thus trading computation cost for less communication cost.
By making a favorable tradeoff between communication and computation costs, the total cost of the system is largely saved. Numerical experiments validate the superiority of our algorithms over the conventional one.

There are many future research topics. One of them is extending the algorithm to the conditions where the links between nodes have different costs of communication. It requires each node to choose communication nodes with a different criterion since in addition to the data importance, each node has to consider the various communication cost of its neighboring nodes. Another topic is changing the method of deciding the number of communication nodes. In this paper, the heuristic searching procedure is applied. Given the exact convergence rate corresponding to the number of communication nodes, an optimization problem can be derived to get the optimal number which can make the best tradeoff between communication and computation.

\begin{IEEEbiography}[{\includegraphics[width=1in,height=1.25in,clip,keepaspectratio] {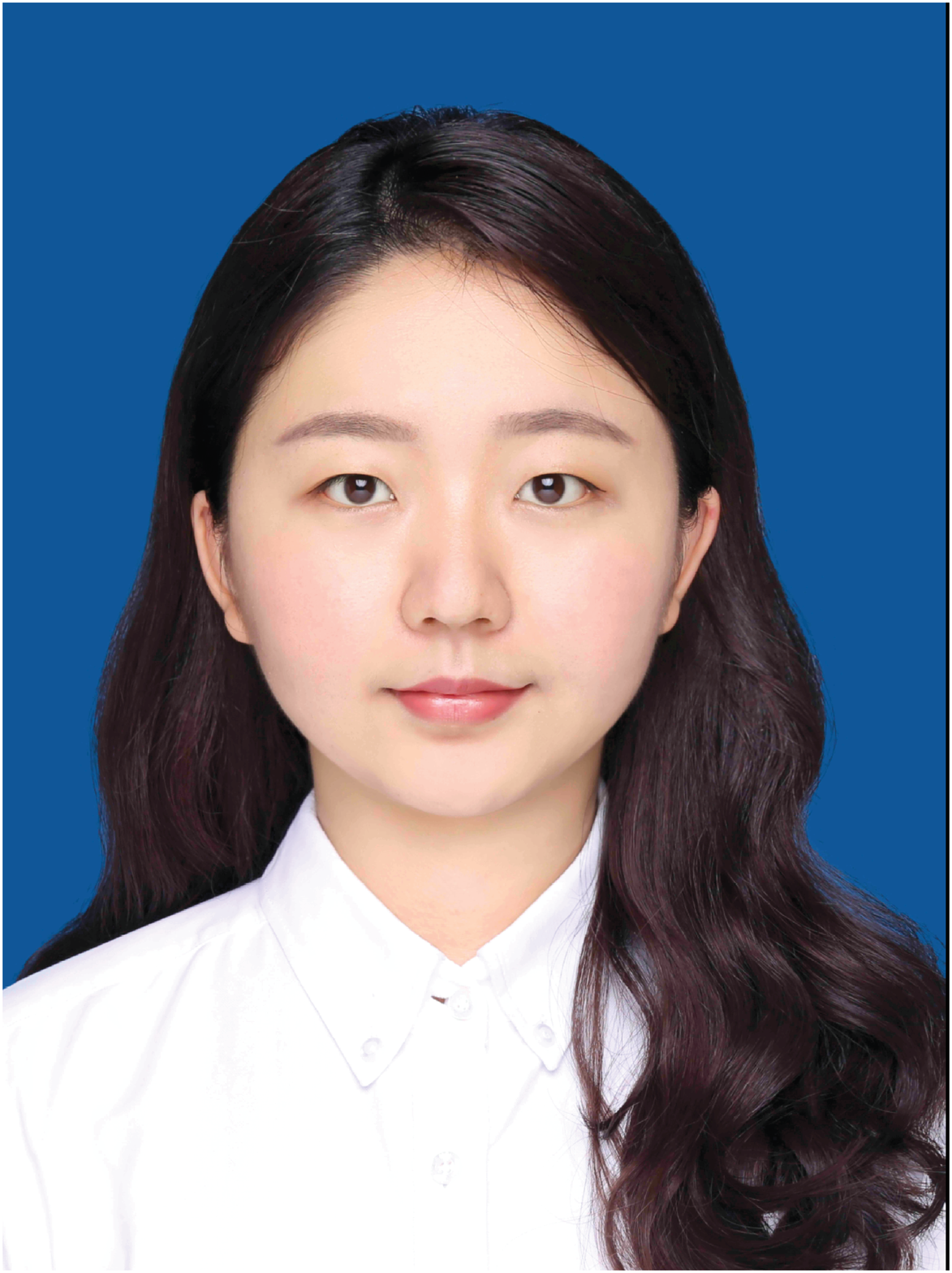}}]{Zhuojun~Tian} (S'19) received the B.S.Eng. degree in information engineering from Zhejiang Univerisy, Hangzhou, China, in 2019. She is currently pursuing the Ph.D. degree in information and communication engineering under the supervision of Prof. Z. Zhang at Zhejiang University. Her current research interests include distributed algorithms, massive MIMO, and machine learning for wireless networks.
\end{IEEEbiography}

\begin{IEEEbiography}[{\includegraphics[width=1in,height=1.25in,clip,keepaspectratio] {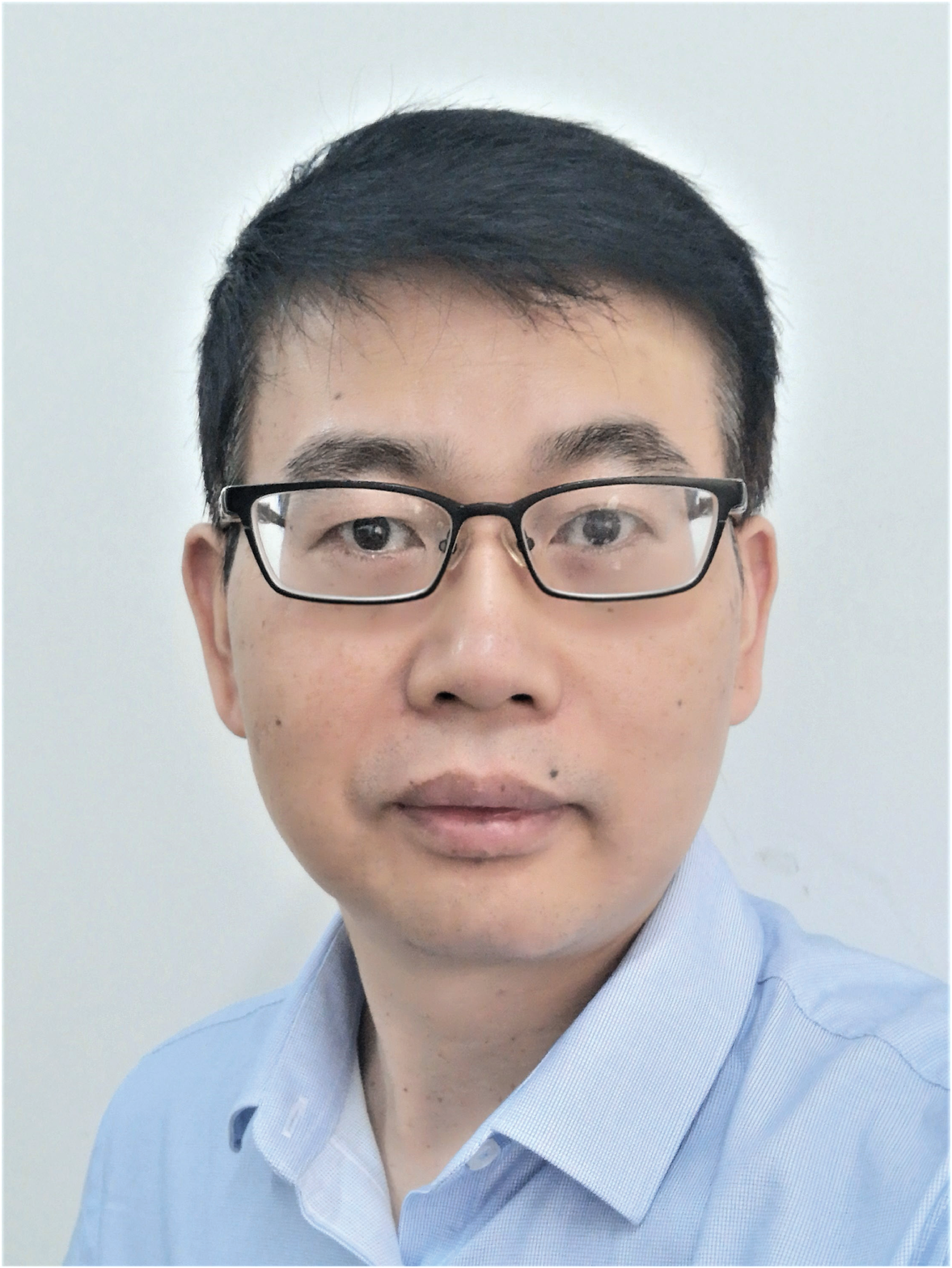}}]{Zhaoyang~Zhang} (M'02) received his Ph.D. degree from Zhejiang University, Hangzhou, China, in 1998, where he is currently a Qiushi Distinguished Professor. His current research interests are mainly focused on the fundamental aspects of wireless communications and networking, such as information theory and coding, network signal processing and distributed learning, AI-empowered communications and networking, network intelligence with synergetic sensing, computation and communication, etc. He has co-authored more than 300 peer-reviewed international journal and conference papers, and is a co-recipient of 7 conference best paper awards including ICC 2019. He was awarded the National Natural Science Fund for Distinguished Young Scholars by NSFC in 2017.

Dr. Zhang is serving or has served as Editor for \textsc{IEEE Transactions on Wireless Communications}, \textsc{IEEE Transactions on Communications} and \textsc{IET Communications}, etc, and as General Chair, TPC Co-Chair or Symposium Co-Chair for WCSP 2013/2018, Globecom 2014 Wireless Communications Symposium, and VTC-Spring 2017 Workshop HMWC, etc.
\end{IEEEbiography}

\begin{IEEEbiography}[{\includegraphics[width=1in,height=1.25in,clip,keepaspectratio] {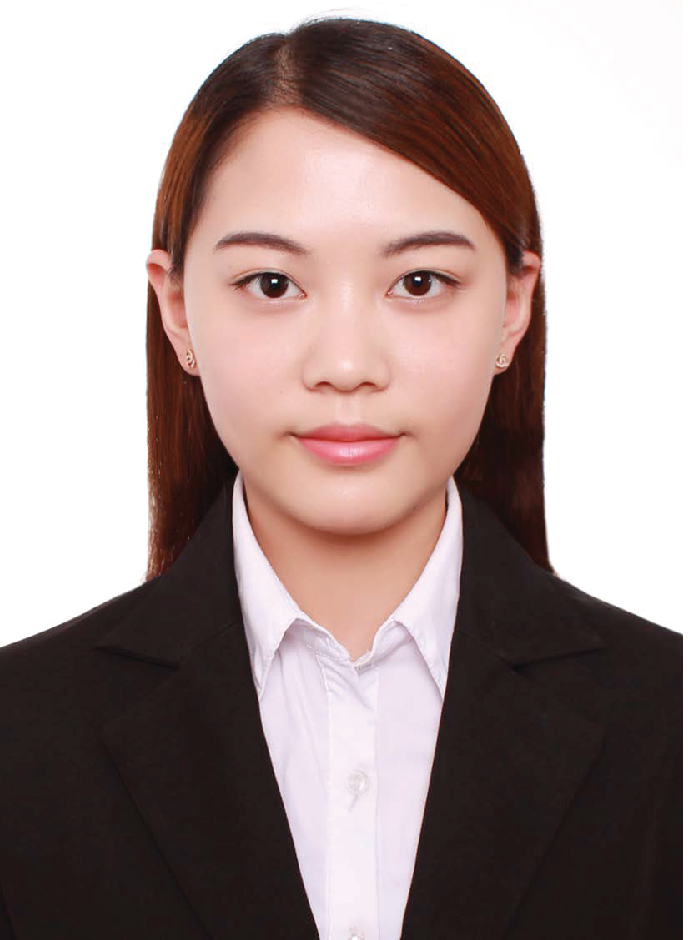}}]{Jue~Wang} (S'18) received the B.S.Eng. degree in communication engineering from the Department of Communication and Information Engineering, Nanjing University of Posts and Telecommunications, Nanjing, China, in 2016. She is currently pursuing the Ph.D. degree in information and communication engineering under the supervision of Prof. Z. Zhang at Zhejiang University. Her current research interests include signal processing, massive access, massive MIMO and machine learning.
\end{IEEEbiography}

\begin{IEEEbiography}[{\includegraphics[width=1in,height=1.25in,clip,keepaspectratio] {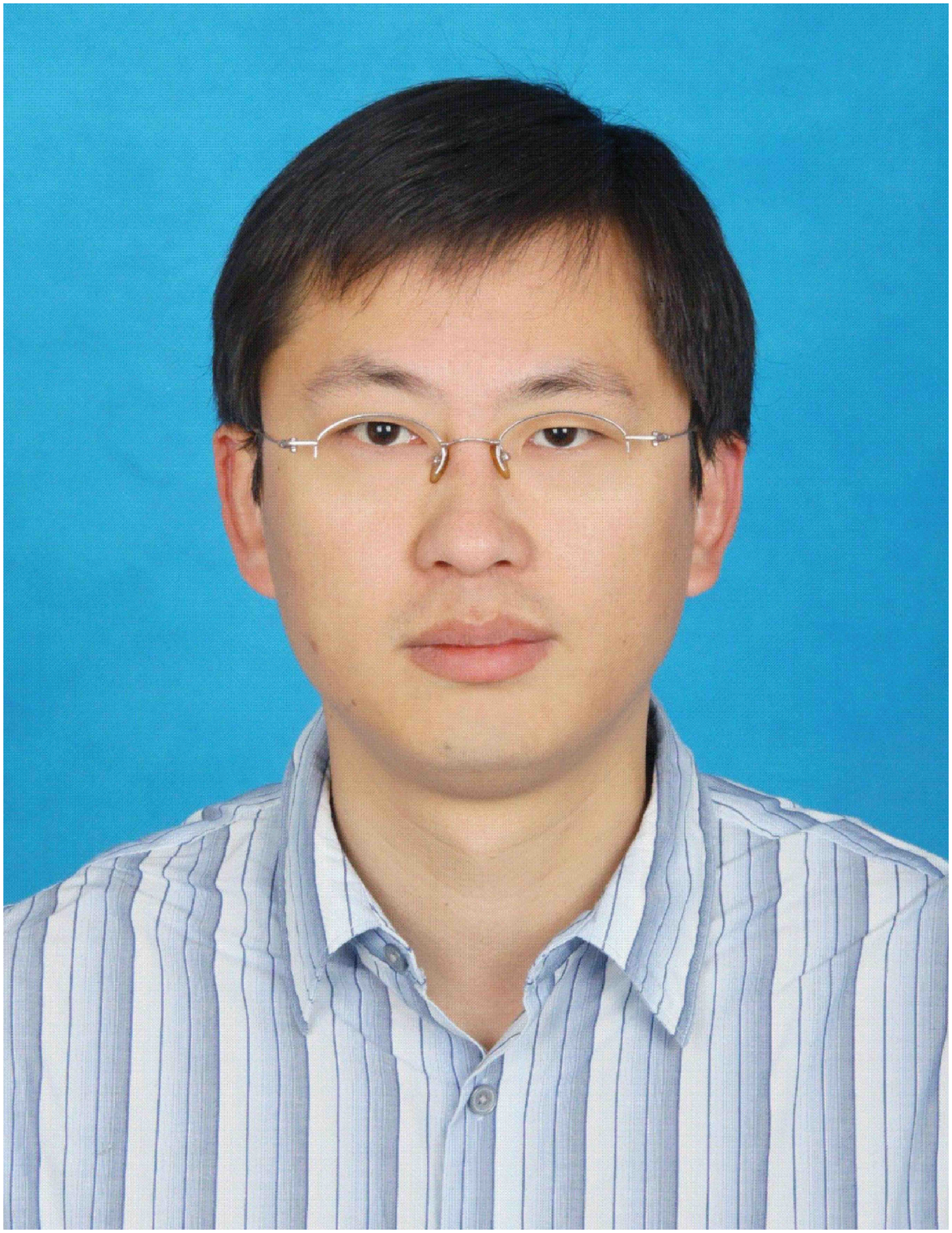}}]{Xiaoming~Chen} (M'10-SM'14) received the B.Sc. degree from Hohai University in 2005, the M.Sc. degree from Nanjing University of Science and Technology in 2007 and the Ph. D. degree from Zhejiang University in 2011, all in electronic engineering. He is currently a Professor with the College of Information Science and Electronic Engineering, Zhejiang University, Hangzhou, China. From March 2011 to October 2016, He was with Nanjing University of Aeronautics and Astronautics, Nanjing, China. From February 2015 to June 2016, he was a Humboldt Research Fellow at the Institute for Digital Communications, Friedrich-Alexander-University Erlangen-N\"urnberg (FAU), Germany. His research interests mainly focus on 5G/6G key techniques, Internet of Things, and smart communications.

Dr. Chen is currently serving as an Editor for the \textsc{IEEE Transactions on Communications} and the \textsc{IEEE Communications Letters}, and a Guest Editor for the \textsc{IEEE Journal on Selected Areas in Communications} ``Massive Access for 5G and Beyond" and the \textsc{IEEE Wireless Communications} ``Massive Machine-Type Communications for IoT". He received the Best Paper Awards at the IEEE International Conference on Communications (ICC) 2019, and the IEEE/CIC International Conference on Communications in China (ICCC) 2018.
\end{IEEEbiography}

\begin{IEEEbiography}[{\includegraphics[width=1in,height=1.25in,clip,keepaspectratio] {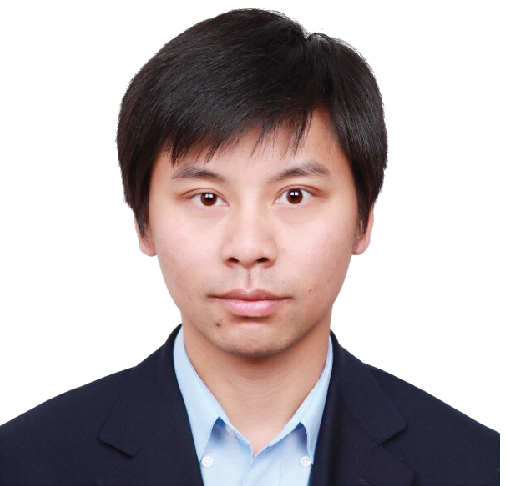}}]{Wei~Wang} (S'08-M'10-SM'15) received the B.S. and Ph.D. degrees from the Beijing University of Posts and Telecommunications, China, in 2004 and 2009, respectively. From 2007 to 2008, he was a Visiting Student with the University of Michigan, Ann Arbor, USA. From 2013 to 2015, he was a Hong Kong Scholar with the Hong Kong University of Science and Technology, Hong Kong. He is currently a Professor with the College of Information Science and Electronic Engineering, Zhejiang University, China. His research interests mainly focus on low-latency wireless communications, mobile edge computing, and stochastic optimization for wireless networks. He is the Editor of the book entitled Cognitive Radio Systems, and serves as an editor of Series on \textsc{Network Softwarization \& Enablers}, \textsc{IEEE Journal of Selected Areas in Communications}, \textsc{IEEE Access}, \textsc{Transactions on Emerging Telecommunications Technologies}, and \textsc{KSII Transactions on Internet and Information Systems}.
\end{IEEEbiography}

\begin{IEEEbiography}[{\includegraphics[width=1in,height=1.25in,clip,keepaspectratio] {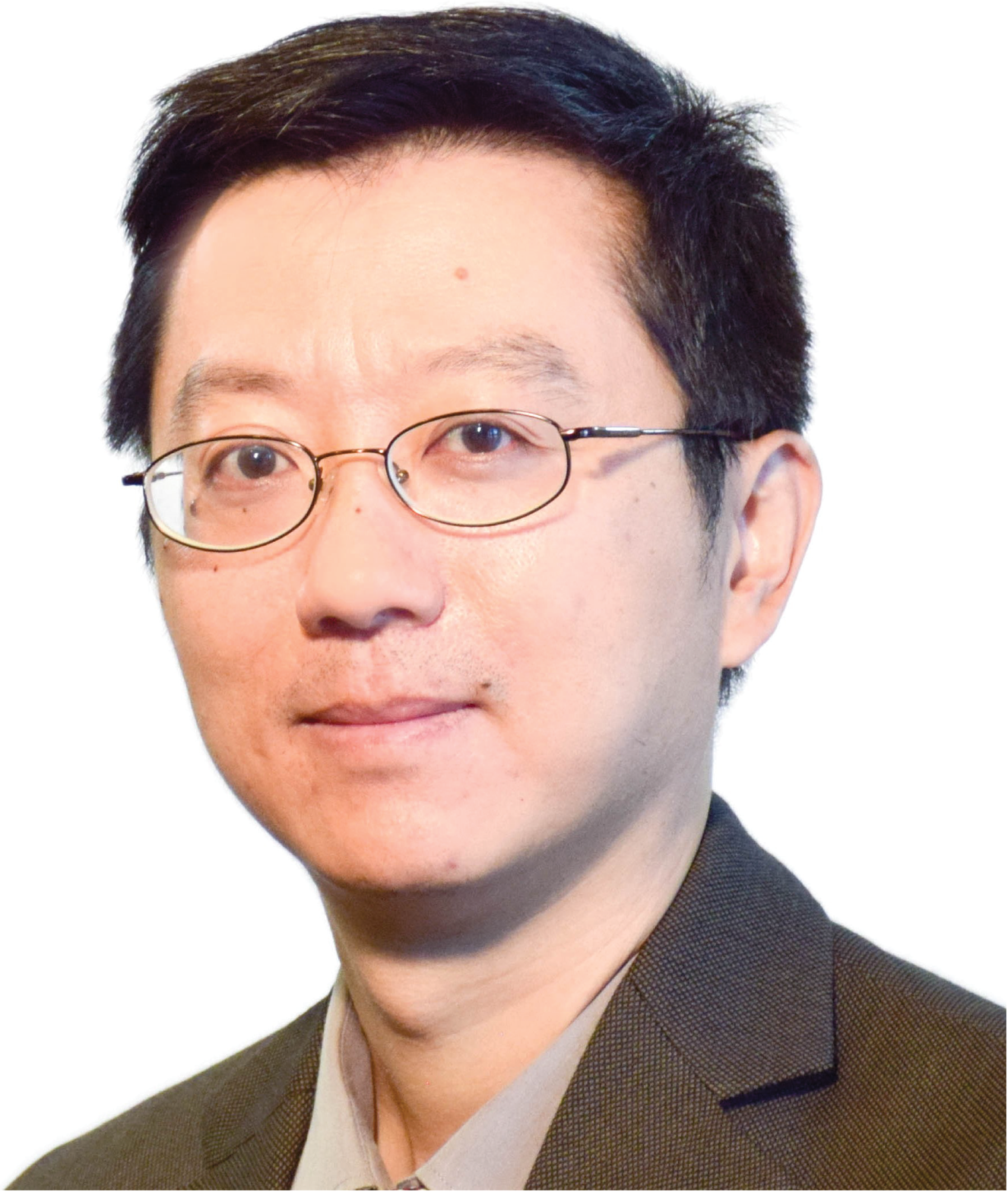}}]{Huaiyu~Dai} (F'17) received the B.E. and M.S. degrees in electrical engineering from Tsinghua University, Beijing, China, in 1996 and 1998, respectively, and the Ph.D. degree in electrical engineering from Princeton University, Princeton, NJ in 2002.

He was with Bell Labs, Lucent Technologies, Holmdel, NJ, in summer 2000, and with AT\&T Labs-Research, Middletown, NJ, in summer 2001. He is currently a Professor of Electrical and Computer Engineering with NC State University, Raleigh, holding the title of University Faculty Scholar. His research interests are in the general areas of communication systems and networks, advanced signal processing for digital communications, communication theory, and information theory. His current research focuses on networked information processing and crosslayer design in wireless networks, cognitive radio networks, network security, and associated information-theoretic and computation-theoretic analysis.

He has served as an editor of \textsc{IEEE Transactions on Communications}, \textsc{IEEE Transactions on Signal Processing}, and \textsc{IEEE Transactions on Wireless Communications}. Currently he is an Area Editor in charge of wireless communications for \textsc{IEEE Transactions on Communications}, and a member of the Executive Editorial Committee for \textsc{IEEE Transactions on Wireless Communications}. He co-chaired the Signal Processing for Communications Symposium of IEEE Globecom 2013, the Communications Theory Symposium of IEEE ICC 2014, and the Wireless Communications Symposium of IEEE Globecom 2014. He was a co-recipient of best paper awards at 2010 IEEE International Conference on Mobile Ad-hoc and Sensor Systems (MASS 2010), 2016 IEEE INFOCOM BIGSECURITY Workshop, and 2017 IEEE International Conference on Communications (ICC 2017).
\end{IEEEbiography}

\end{document}